\newcommand\ciut{\hspace{.3mm}}
\def\<{\langle\,}\def\>{\,\rangle}
\newcommand{\kto}{\,{\to}\,}
\def\To{\Rightarrow}\def\Ot{\Leftarrow}
\newcommand\lupa{\mbox{\it loop\/}}
\def\FV{\mbox{\rm FV\/}}
\def\pusty{\varnothing}
\def\Rarrow{\To\kern -.650em \To}
\newcommand\klaus{\mathrel{{:}-}}
\newcommand{\sms}{\normalfont\textsc{sms}\xspace}
\newcommand{\ground}[1]{\mathit{ground}(#1)\xspace}
\def\Z{{\mathcal Z}}
\def\P{{\mathcal P}}
\def\Is{{\mathcal I}}\def\Mm{{\mathfrak M}}
\newcommand\erA{{\mathrm{Ans}}}\newcommand\erD{{\mathrm{Dsc}}}
\newcommand\erS{{\mathrm{Sbg}}}\newcommand\erH{{\mathrm{Hd}}}
\newcommand\erG{{\mathrm{Gol}}}\newcommand\erE{{\mathrm{Env}}}
\newcommand\erQ{{\mathrm{Qst}}}
\newcommand\rA{{\mathrm A}}\newcommand\rB{{\mathrm B}}
\newcommand\rF{{\mathrm F}}
\newcommand\rOm{{\mathrm\Omega}}
\newcommand\rK{{\mathrm K}}
\newcommand\rT{{\mathrm T}}
\newcommand\rP{{\mathrm P}}\newcommand\rQ{{\mathrm Q}}
\newcommand\rR{{\mathrm R}}\newcommand\rS{{\mathrm S}}
\newcommand\rY{{\mathrm Y}}
\newtheorem{theorem}{Theorem}
\newtheorem{corollary}[theorem]{Corollary}
\newtheorem{proposition}[theorem]{Proposition}
\def\wrt{with respect to\xspace}
\def\iff{if and only if\xspace}
  \title[First-order ASP]
        {First-order answer set programming as constructive proof search}
  \author[A. Schubert and P. Urzyczyn]
         {Aleksy Schubert and Paweł Urzyczyn\\
         University of Warsaw, Poland\\
         \email{[alx,urzy]@mimuw.edu.pl}}
\newtheorem{lemma}{Lemma}[section]
\begin{document}
\nocite{*}

\label{firstpage}

\maketitle

  \begin{abstract}
  We propose an interpretation of the first-order answer set
  programming (FOASP) in terms of intuitionistic proof theory.  
  It is obtained by two polynomial 
  translations between FOASP and  the bounded-arity fragment 
  of~the~$\Sigma_1$ level of the Mints hierarchy in~first-order 
  intuitionistic logic. It follows that $\Sigma_1$ formulas using
  predicates of fixed arity (in particular unary) is of the same 
  strength as FOASP. Our construction reveals a close similarity 
  between constructive provability and stable entailment,
  or equivalently, between the construction of an answer set and 
  an intuitionistic refutation. This paper is under consideration 
for publication in Theory and Practice of Logic Programming

  \end{abstract}

  \begin{keywords}
    Answer set programming, intuitionistic logic, proof terms,
lambda calculus
  \end{keywords}

\tableofcontents
\section{Introduction}
The logic programming paradigm originates from the fundamental
idea that program execution can be realized as constructive 
proof search done by means of SLD resolution. A solution to 
a computational problem is obtained
in proof construction as a witness of the statement being proven.
While very natural and effective, this idea is 
hardly applied any further than to slight variations of Horn clauses.
(Works exploring the constructive approach, like~\cite{fukom17}, are not
very common.) 
Consider for example first-order ASP which is a form of adding
negation to Datalog. The standard way in which ASP is implemented
consists in 
encoding a~program into a satisfiability problem and then
applying a SAT-solver~\cite{Brewkaetal}. This clearly goes away 
from the proof search oriented motivation of logic programming.

In this paper we attempt to argue that this semantic approach 
to ASP is not necessary. We demonstrate a proof-theoretical 
interpretation of answer set programming by showing a~mutual
translation between first order ASP and the bounded arity $\Sigma_1$
fragment of intuitionistic predicate logic, as defined in~\cite{suz16}.
This extends our previous work~\cite{su-asp} where we have
defined a~similar equivalence between propositional answer set programming and
a~fragment of intuitionistic propositional logic. 
The extension required a~new approach since the techniques used 
in~\cite{su-asp} turned out to be not adequate \relax
in the first-order case. Of course, a first-order program can be
propositionalized by grounding, and then one could repeat the 
propositional construction to obtain a (quantifier-free) formula provable
iff the given entailment holds. However, the resulting formula would be no
longer polynomial in the size of the original program, as the
grounding is exponential in size. 

In addition, avoiding the exponential explosion is far not enough. 
The straightforward generalization of the propositional approach 
would yield a formula that does not fit our  {\it co-}{\sc NExptime} 
fragment of $\Sigma_1$, because targets of
subformulas would be of arbitrary arity. Instead 
we built our formula using "easy" axioms with nullary targets.
While the overall scheme ``unsound versus inconsistent'' 
remains
the same, the construction must be different. This is because
we can no longer use proof goals to control proof construction,
(in particular to discover loops in the unsound case~B) 
because there is too few of them. Instead we  must
use (significantly more non-deterministic) approach, 
continuation-passing in spirit.

Strictly speaking, the direct correspondence between proofs and
programs has a~``contravariant'' flavour in the case of ASP (also
known as stable model semantics). Indeed, a~stable model of a~program
does not represent a~proof of a~formula but rather a~{\it
  refutation\/}, i.e., a~certificate that a~proof does not exist. Put
it differently, provability in a~fragment of first-order
intuitionistic logic is equivalent to stable {\it entailment\/}, i.e.,
the ASP consequence relation.

The construction consists of two polynomial time translations.
First, for a~given first-order logic program $\P$ with negations, 
and a~given atom $\Omega$ we define a~formula~$\varphi$ so that 
$\P$ entails $\rOm$ under stable model semantics \iff $\varphi$
is an intuitionistic theorem (Proposition~\ref{111yyy}).

The formula~$\varphi$ is a~first-order formula~using only unary
predicate symbols  and only $\forall$ and~$\to$ as the logical
connectives. Since all quantifiers occur in $\varphi$ at negative
positions, our formula belongs to the $\Sigma_1$ fragment of Mints
hierarchy~\cite{suz16}.  In general, provability of
$\Sigma_1$-formulas is {\sc Nexpspace\/}-complete.  However, for any
fixed bound on the arity of predicate symbols, in particular for our
unary object language. 
 the provability problem turns out to be 
\mbox{{\it  co}-{\sc Nexptime}} complete.

Actually, the formula $\varphi$ constructed in Section~\ref{tamtam} 
is not yet unary.
It is an {\it easy\/} formula in the sense of~\cite{suz16}, because
all its non-atomic subformulas have nullary targets.
Provability of such formulas can however be reduced to provability 
with unary predicates. 

Our second construction reduces the question of refutability
(non-provability) of 
a~formula~$\varphi$ to the non-existence of a~stable model of a~logic
program with negations. Yet differently, given a~formula~$\varphi$, 
we define $\P$ so that
$\varphi$ is not an intuitionistic theorem \iff $\P$ has a~stable model
(Proposition~\ref{222xxx}).
The form of the latter equivalence is intended to stress that the stable 
models of $\P$ correspond to intuitionistic refutations.  The construction
works for $\Sigma_1$ formulas using predicate symbols of any bounded 
arity, not necessarily unary.

The two translations together add up to 
a polynomial equivalence of 
two problems known to be {\sc Nexptime}-complete: the existence 
of stable models and the provability for bounded-arity 
$\Sigma_1$ formulas.

As such, it does not constitute a formally new theorem
concerning complexity of the formalisms, and it was not 
meant to do so. The purpose and contribution of this paper is to
point out a natural constructive counterpart of the first-order
answer set programming, and by derivation of the correspondence to
expose the constructive nature of ASP.
In this way we complement the purely propositional accounts
\cite{Pearce06,su-asp} of the links between logic programming and
intuitionistic logic. In addition the current account is in
terms of proof theory and its details are closely related to the original
procedure of answer derivation in logic programming.

\section{Answer set programming and logic}\label{aspalogic}

First order answer set programming is an extension of Datalog
by negation understood as ``fixpoint''~\cite{kolpap}. The short presentation 
we give here is essentially following~\cite{Brewkaetal}.

\paragraph{Basic definitions:} A {\it positive atom\/}, or simply 
an~{\it atom\/}, is an atomic formula~$\rP(\vec x\ciut)$, where~$\vec x$ 
are individual variables or constants. A {\it negative atom\/}
has the form~$\neg\rP(\vec x)$, where $\rP(\vec x\ciut)$ is a~positive atom.
A {\it ground\/} atom is an atom where only constants can occur 
(and no variables).

A~{\it clause\/} is an expression of the form 
${\bf a}\klaus {\bf a}_1,\ldots,{\bf a}_n$,
where ${\bf a}$ is a~positive atom and ${\bf a}_1,\ldots,{\bf a}_n$ 
are positive or negative atoms. A~{\it ground\/} clause is one 
without variables. 
A {\it program\/} is a~finite set of clauses~$\P$, together with an
associated nonempty finite set of constants $D_\P$, the {\it domain
of\/}~$\P$, which is assumed to include all constants occurring in~$\P$. 

We write $B(\P)$ for the set of all positive ground atoms $\rP(\vec c\ciut)$,
where $\rP$ is a~predicate in the language of~$\P$ and $\vec c$ is a~vector
of constants in~$D_\P$ of appropriate length. 
By $\ground{\P}$ we denote the set of all ground clauses 
obtained from clauses of~$\P$ as substitution instances using 
constants in~$D_\P$. Note that $\ground{\P}$ can be seen
as a propositional program with negations using ground atoms in~$B(\P)$ 
as propositional literals. The following definitions are therefore immediate 
adaptations from the propositional case. 

A {\it model\/} of the language of~$\P$ is a set~$\Mm$ 
of positive ground atoms. 
One can think of $\Mm$ as of a~Herbrand structure
of domain~$D_\P$, satisfying the ground atoms in $\Mm$ 
and no other atom, or as of a~Boolean valuation of ground atoms. 
Given a program $\P$ and a model $\Mm$, we transform~$\ground{\P}$ 
into a program~$\P^\Mm$ without negations. For every ground atom~${\bf a}$:
\begin{itemize}
\item If ${\bf a}\not\in \Mm$ then we delete $\neg {\bf a}$ from the rhs of all 
clauses of~$\ground{\P}$;
\item If ${\bf a}\in \Mm$ then we delete all clauses of~$\ground{\P}$ 
with $\neg {\bf a}$ at the rhs.
\end{itemize}
The {\it interpretation\/} of $\P$ under $\Mm$, denoted $I(\P,\Mm)$, 
 is the least fixed point of the operator:
\begin{equation*}
  \label{eq:fixed-point}
  F(\Is)=\Is\cup\{{\bf a}\ |\ \mbox{there is 
a clause ${\bf a}\klaus {\bf a}_1,\ldots,{\bf a}_n$ in $\P^\Mm$
    such that all ${\bf a}_i$ are in\ }\Is\}.
\end{equation*}
A model $\Mm$ of~$\P$ is {\it stable\/} (or it is an {\it answer set\/}
for~$\P$) \iff \mbox{$\Mm=I(\P,\Mm)$}. We 
say that $\P$ {\it entails\/} an 
atom ${\bf a}$
{\it under SMS\/}, written \mbox{$\P\models_{\sms}{\bf a}$}, 
when every stable model
of~$\P$ satisfies~${\bf a}$. It is known~\cite{dantsin,kolpap} 
that the existence
of~a~stable model and the entailment under SMS are, respectively, 
$\mbox{\sc NExptime}$ and {\it co-\/}$\mbox{\sc NExptime}$ complete
problems.

\begin{lemma}\label{pljkuiuh}
Let $\rOm$ be a nullary predicate letter not occurring in a~program~$\P$. 
Then $\P\models_{\sms}\rOm$ \iff $\P$ has no stable model.
\end{lemma}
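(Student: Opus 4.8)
The plan is to prove the two directions separately, using the key fact that $\rOm$ is nullary and does not occur anywhere in $\P$, so the clauses of $\P$ neither define $\rOm$ nor mention it in any body. Consequently, for \emph{any} model $\Mm$ of the language of $\P$, the presence or absence of $\rOm$ in $\Mm$ has no effect on $\P^\Mm$ (no clause contains $\neg\rOm$, so no clause gets deleted and no negative atom $\neg\rOm$ gets erased on account of $\rOm\notin\Mm$), and no clause of $\P^\Mm$ can ever produce $\rOm$ in the fixed-point computation. Hence $I(\P,\Mm)$ never contains $\rOm$, and moreover $I(\P,\Mm)=I(\P,\Mm\setminus\{\rOm\})$.

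For the direction ``$\P$ has no stable model $\Rightarrow \P\models_{\sms}\rOm$'': this is vacuous, since if there is no stable model then every stable model (there are none) trivially satisfies $\rOm$.

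For the converse, I would argue contrapositively: suppose $\P$ has a stable model $\Mm$, so $\Mm=I(\P,\Mm)$; I must exhibit a stable model of $\P$ that does \emph{not} satisfy $\rOm$, which will show $\P\not\models_{\sms}\rOm$. By the observation above, $\rOm\notin I(\P,\Mm)=\Mm$, so in fact $\Mm$ itself is already a stable model not containing $\rOm$. (Equivalently, one checks $\Mm\setminus\{\rOm\}$ is stable, but since $\Mm$ does not contain $\rOm$ in the first place this is immediate.) Therefore not every stable model of $\P$ satisfies $\rOm$, i.e., $\P\not\models_{\sms}\rOm$. Combining the two directions gives the equivalence.

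The routine part is verifying the ``no effect'' claims about the reduct $\P^\Mm$ and the fixed-point operator $F$ when $\rOm$ is toggled; these follow directly from the syntactic condition that $\rOm$ occurs in no clause of $\P$, hence in no clause of $\ground{\P}$. The only point requiring any care — and the closest thing to an obstacle — is making explicit that a nullary $\rOm$ is still an element of $B(\P)$ only if $\rOm$ is counted as a predicate ``in the language of $\P$''; here it is introduced precisely so that $\rOm\in B(\P)$ and hence ``$\P\models_{\sms}\rOm$'' is a meaningful statement, yet the clauses defining its behaviour are absent, which is exactly what forces $\rOm$ out of every interpretation $I(\P,\Mm)$.
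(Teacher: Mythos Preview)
Your proof is correct. The paper actually states this lemma without proof, treating it as an elementary observation; your argument supplies exactly the standard justification one would give: since $\rOm$ never occurs in any clause of $\P$, it cannot appear in the head of any clause of $\P^\Mm$, hence $\rOm\notin I(\P,\Mm)$ for every $\Mm$, so any stable model $\Mm=I(\P,\Mm)$ witnesses $\P\not\models_{\sms}\rOm$, while the absence of stable models makes the entailment vacuous.
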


\paragraph{Intuitionistic logic:} We consider formulas of minimal predicate
logic with $\to$ and $\forall$ as the only connectives. Our formulas 
use neither function symbols nor equality. In particular, the only individual
terms are {\it object variables\/}, written in lower case, e.g.,~$x,y,\dots$
A formula is therefore either an {\it atom\/} 
$\rP(x_1,\ldots,x_n)$, where $n\geq 0$,
or an implication $\varphi\to\psi$, or it has the form $\forall x\,\varphi$.
\relax 
The convention is that $\to$ associates to the right, 
i.e.,~$\varphi\to\psi\to\vartheta$ stands for~$\varphi\to(\psi\to\vartheta)$.

Quantifier scopes are marked by parentheses or dots; for example \relax
$\forall x.\,\rP(x)\to \rQ(x)$ is the same as $\forall x(\rP(x)\to \rQ(x))$.
The set of free variables in a~formula $\varphi$ is defined as usual and
denoted~$FV(\varphi)$. The notation $\varphi[x:=y]$ stands for 
a~capture-avoiding substitution of $y$ for all free occurrences
of $x$ in $\varphi$. We also use the notion of simultaneous substitution
written $\varphi[\vec x:=\vec y]$.

A clause of the form ${\bf a} \klaus {\bf a}_1,\ldots, {\bf a}_n$ 
(without negations) is identified with the universal formula 
\mbox{$\forall \vec x({\bf a}_1\to\cdots\to {\bf a}_n\to {\bf a})$},
where $\vec x$ are all individual variables occurring in the clause. 
In~particular a~ground clause is an implicational closed formula, 
and a ground program can be seen as a set of such formulas.

Let $\overline \P$ be obtained from~$\ground{\P}$ 
by replacing all occurrences 
of negative atoms $\neg\rP(\vec c\ciut)$ by~$\overline\rP(\vec c\ciut)$,
where $\overline\rP$ is a new predicate 
symbol. 
Now let $\overline\Mm=\{\overline{\bf b}\ |\ 
{\bf b}\in B(\P)-\Mm\}$.
The following easy lemma (where $\vdash$ is intuitionistic
provability) is an immediate consequence of 
Lemma~4 in~\cite{su-asp}.

\begin{lemma}\label{dsewawqd}
$I(\P,\Mm)=\{{\bf a}\in B(\P)\ |\ \overline\P\cup\overline\Mm\vdash {\bf a}\}$.
\end{lemma}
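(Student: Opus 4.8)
The plan is to obtain the statement as a direct instance of Lemma~4 in~\cite{su-asp}, its propositional counterpart. The point is that, once the domain $D_\P$ is fixed, $\ground{\P}$ is literally a propositional program with negations over the finite set of propositional letters $B(\P)$; the reduct $\P^\Mm$, the operator $F$ and hence $I(\P,\Mm)$ are all computed inside this propositional world, and so are $\overline\P$ and $\overline\Mm$, which contain no quantifiers and no individual variables at all. So I would first spell out this translation and then invoke the propositional lemma with $B(\P)$ (together with the barred copies $\overline\rP(\vec c)$) playing the role of propositional atoms.

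For completeness I would include the short argument behind that lemma. Both $\overline\P$ and $\overline\Mm$ are negation-free --- the barred atoms $\overline\rP(\vec c)$ are treated as ordinary propositional letters --- so $\overline\P\cup\overline\Mm$ is a \emph{definite} propositional program, and for such programs intuitionistic, minimal and classical derivability of an atom coincide and equal membership in the least Herbrand model, i.e.\ the least fixed point of the one-step consequence operator $T$. Thus $\{{\bf a}\in B(\P)\mid\overline\P\cup\overline\Mm\vdash{\bf a}\}$ is exactly the restriction to $B(\P)$ of $\mathrm{lfp}(T)$. It then remains to match $\mathrm{lfp}(T)$ with $I(\P,\Mm)$. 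In $\mathrm{lfp}(T)$ the barred atoms that get derived are precisely those listed in $\overline\Mm$ --- no clause head of $\overline\P$ is barred --- so $\overline{\bf b}$ is derivable exactly when ${\bf b}\notin\Mm$. Hence a clause of $\ground{\P}$ with positive body ${\bf a}_1,\dots,{\bf a}_k$ and negative body $\neg{\bf b}_1,\dots,\neg{\bf b}_m$ can fire in the $T$-iteration iff every ${\bf b}_j\notin\Mm$ and every ${\bf a}_i$ is already present; and ``every ${\bf b}_j\notin\Mm$'' is exactly the condition under which the Gelfond--Lifschitz reduction keeps this clause, turning it into the definite clause ${\bf a}\klaus{\bf a}_1,\dots,{\bf a}_k$ of $\P^\Mm$. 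Therefore $T$ restricted to $B(\P)$ acts as $F$ on $\P^\Mm$, so the two least fixed points agree on $B(\P)$, which is the claimed equality.

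The only step that is not pure bookkeeping is the inclusion ``$\overline\P\cup\overline\Mm\vdash{\bf a}$ implies ${\bf a}\in I(\P,\Mm)$'': a priori one might worry that intuitionistic proofs exploiting the barred atoms and the facts $\overline\Mm$ could derive positive atoms that forward chaining misses. This is exactly what the definiteness observation in the previous paragraph rules out, and it is where the content of the original Lemma~4 sits; once it is in place, everything else is a routine verification, which is why the lemma is genuinely an immediate consequence of its propositional predecessor.
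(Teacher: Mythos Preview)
Your proposal is correct and follows exactly the route the paper indicates: the paper simply states that the lemma is an immediate consequence of Lemma~4 in~\cite{su-asp}, and you reproduce this reduction to the propositional case, additionally spelling out the standard argument behind that propositional lemma. There is nothing to correct or contrast.
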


\paragraph{Lambda-terms:}

We use lambda-terms for proof notation as e.g.~in~\cite{suw2013-www}.
In addition to object variables, used in formulas, we also have 
{\em proof variables\/} occurring in proofs. \relax
By convention, proof variables are written in upper case, 
e.g.,~$X,Y,\dots$ 
A finite set of declarations $(X:\varphi)$, where $X$ is a~proof 
variable and~$\varphi$ is a~formula, is called an {\it environment\/}
provided there is at most one declaration for any variable $X$.
A {\it proof term\/} (or simply ``term'') is one of the following:  
\begin{itemize} 
\item a proof variable,
\item an abstraction $\lambda X\,{:}\,\varphi.\,M$, where $\varphi$ is 
a~formula and $M$ is a~proof term, 
\item an abstraction $\lambda x\,M$, where $M$ is a proof term, 
\item an application $MN$, where $M$, $N$ are proof terms,
\item an application $Mx$, where $M$ is a proof 
term and $x$ is an object variable.
\end{itemize}  
That is, we have two kinds of lambda-abstraction: the 
proof abstraction \mbox{$\lambda X\ciut{:}\,\varphi.\,M$} and
the object abstraction $\lambda x\,M$.
There are also two forms of application: the proof application $MN$, where $N$
is a proof term, and the object application~$My$, where~$y$ is an object
variable. Terms (and also formulas) are taken up to alpha-conversion, i.e.,
the choice of bound variables is irrelevant.

The type-assignment rules in Figure~\ref{prufassirules} infer 
judgments of the form $\Gamma \vdash M: \varphi$, where~$\Gamma$ 
is an environment, $M$ is a~term, and $\varphi$ is a formula.
In rule $({\forall}{I})$ we require $x \not \in FV(\psi)$ for 
every $\psi\in\Gamma$.

The formalism is used liberally. For instance, we may say 
that ``a term $M$ has type $\alpha$'' leaving the environment implicit. 
We also may identify proof environments with the appropriate sets of formulas. 

The notion of a term in {\it long normal form\/} (lnf) 
is defined according to its type in a given environment.
\relax
\begin{itemize}
\item If $N$ is an lnf of type $\varphi$ then $\lambda x\, N$ is an lnf of
  type $\forall x\,\varphi$.\relax
\item If $N$ is an lnf of type $\psi$ then 
$\lambda X\,{:}\,\varphi.\, N$  is an lnf 
of type $\varphi \to \psi$.
\item If $N_1,\ldots, N_n$ are lnf or object variables, and $XN_1\ldots
  N_n$ is of an atom type, then $XN_1\ldots N_n$ is an lnf.
\end{itemize}
The following lemma makes it possible to restrict proof search 
to long normal forms. 

\begin{lemma}[Schubert et~al\mbox{. }{2015}] 
\label{lemma:lnf}
If $\varphi$ is intuitionistically derivable from $\Gamma$ 
then $\Gamma \vdash N:\varphi$, for some lnf~$N$.
\end{lemma}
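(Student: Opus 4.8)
The plan is to get from an arbitrary derivation to a $\beta$-normal one, and then to make that proof $\eta$-long by a terminating, type-directed expansion. First I would observe that, by the definition of intuitionistic derivability through the type-assignment system, the hypothesis already supplies a proof term $M$ with $\Gamma\vdash M:\varphi$. Next I would appeal to strong normalization for this calculus --- a routine extension of the simply typed case, the object reductions $(\lambda x\,N)y\to N[x:=y]$ being mere renamings --- together with subject reduction, to assume that $M$ is $\beta$-normal without altering $\Gamma$ or $\varphi$. At this point I would record the standard shape fact: a $\beta$-normal term is either an abstraction ($\lambda x\,N$ or $\lambda X{:}\psi.\,N$) or \emph{neutral}, i.e.\ an application spine $X N_1\ldots N_k$ headed by a proof variable (object variables never head a proof term); in particular every $\beta$-normal term of atomic type is neutral.

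The core of the argument is a one-step rewrite on typed $\beta$-normal terms. I would pick an occurrence of a neutral subterm $P$ whose type $\tau$ is \emph{compound} (an implication or a universal formula) and which sits in a position where the grammar of lnf's demands a canonical form --- the whole term, the body of an abstraction, or a proof argument of an application, but never the head of an application --- and replace it by its one-step $\eta$-expansion: $\lambda X{:}\psi_1.\,(PX)$ when $\tau=\psi_1\to\psi_2$, and $\lambda z\,(Pz)$ with $z$ fresh when $\tau=\forall z\,\psi$. Since $P$ is neutral, so are $PX$ and $Pz$, so no $\beta$-redex is created and the term stays $\beta$-normal; moreover a term is normal for this rewrite exactly when every such subterm is already an lnf, and then one checks directly against the three defining clauses that the whole term is an lnf. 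So it remains to see that the rewrite terminates, and then exhaustively applying it to the $\beta$-normal $M$ of the first paragraph yields the required lnf $N$ with $\Gamma\vdash N:\varphi$, types being preserved by construction.

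For termination I would attach to a $\beta$-normal term $M$ the finite multiset $\mathcal{N}(M)$ consisting of the sizes of the types of all neutral subterms of $M$ that lie in an lnf-relevant position and have compound type. Expanding $P:\psi_1\to\psi_2$ deletes one copy of $|\psi_1\to\psi_2|$ and inserts $|\psi_2|$ (for the new neutral subterm $PX$) and, if the fresh $X$ has compound type, $|\psi_1|$ (for $X$, now an argument); expanding $P:\forall z\,\psi$ deletes $|\forall z\,\psi|$ and inserts at most $|\psi|$, the fresh object variable $z$ not being a neutral subterm. In every case the inserted sizes are strictly below the deleted one and nothing else in the multiset changes, so $\mathcal{N}(M)$ strictly decreases in the Dershowitz--Manna multiset order, and the rewrite is well founded.

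The one genuinely delicate point is precisely this measure. A plain induction on the term fails because $\eta$-expansion makes terms larger, while a plain induction on the type fails because the atomic case must recurse into arguments of arbitrary type; the multiset of type sizes is what reconciles the two, and I expect the bookkeeping around ``lnf-relevant positions'' to be the only place demanding genuine care. (An alternative that sidesteps $\lambda$-terms is to invoke cut elimination for the $\{\to,\forall\}$-fragment of $\mathbf{LJ}$ and read an lnf off a cut-free derivation by induction on its height, with the right rules building the abstraction prefix and a maximal block of left rules building the head spine $X N_1\ldots N_k$.)
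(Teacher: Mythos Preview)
The paper does not actually prove this lemma: it is stated with an attribution to Schubert et~al.\ (2015) and used as a black box, with no accompanying \texttt{proof} environment. So there is nothing in the paper to compare your argument against directly.

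That said, your proposal is a correct and entirely standard proof of the result. The two-phase strategy (first $\beta$-normalize via strong normalization and subject reduction, then $\eta$-expand neutral subterms in lnf-relevant positions) is the usual route, and your Dershowitz--Manna multiset measure on type sizes is exactly the right termination argument for the expansion phase; the bookkeeping you flag around which positions count is indeed the only fiddly part, and you have identified it correctly. Your parenthetical alternative via cut elimination in the $\{\to,\forall\}$-fragment of $\mathbf{LJ}$ is also sound and is in fact closer in spirit to how the cited source frames things. Either route would be acceptable here; the authors simply chose to import the result rather than reprove it.
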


\begin{figure}[h]
$$
\Gamma, X: \varphi ~\vdash X:\varphi \quad({A}x)
$$
$$\begin{prooftree}
\Gamma, X: \varphi ~\vdash M:\psi
\justifies
\Gamma ~\vdash \lambda X\ciut{:}\ciut\varphi.\ciut M\ :\ \varphi \to \psi
\thickness=0.08em
\using
	({\to}{I})
\end{prooftree}
\qquad
\begin{prooftree}
\Gamma ~\vdash M : \varphi \to \psi \quad
\Gamma ~\vdash N : \varphi
\justifies
\Gamma ~\vdash MN : \psi
\thickness=0.08em
\using
	({\to}{E})
\end{prooftree}$$
$$\begin{prooftree}
\Gamma ~\vdash M : \varphi
\justifies
\Gamma ~\vdash \lambda x\,M : \forall x \varphi
\thickness=0.08em
\using
	({\forall}{I})
\end{prooftree}
\qquad
\begin{prooftree}
\Gamma ~\vdash M : \forall x \varphi
\justifies
\Gamma ~\vdash My : \varphi[x:=y]
\thickness=0.08em
\using
	({\forall}{E})
\end{prooftree}$$~
\caption{Proof assignment rules\label{prufassirules}}
\end{figure}

\noindent
We are concerned with the $\Sigma_1$ level of the Mints 
hierarchy, defined by: 
\begin{itemize}\parskip=0pt
\item $\Sigma_{1} ::= {\bf a}\ |\ \Pi_{1}\to \Sigma_{1}$; 
\item $\Pi_{1} ::= {\bf a}\ |\ \Sigma_{1}\to\Pi_{1}\ |\ 
\forall x\,\Pi_{1}\,$,\hfill
\end{itemize}
where ${\bf a}$ stands for an atomic formula.

Observe that every $\Sigma_1$-formula takes the form 
$\varphi=\tau_1\to\tau_2\to\cdots\to\tau_q\to{\bf c}$, where all $\tau_i$ are
$\Pi_1$-formulas and ${\bf c}$ is an atom, called the 
{\it target\/} of~$\varphi$. The following follows from 
Lemma~\ref{lemma:lnf}.

\begin{lemma}\label{genera1}
If $\varphi=\tau_1\to\tau_2\to\cdots\to\tau_q\to{\bf c}$ then 
$\Gamma \vdash \varphi$ is derivable
\iff so is $\Gamma,\tau_1,\ldots,\tau_q\vdash {\bf c}$.
\end{lemma}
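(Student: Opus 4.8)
The plan is to prove the two implications separately; both reduce to tracking the shape of proof terms, and the only tool needed beyond the typing rules of Figure~\ref{prufassirules} is Lemma~\ref{lemma:lnf} together with the characterisation of long normal forms stated just before it.

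I would first dispose of the direction ($\Leftarrow$). Given a term $N$ with $\Gamma,\tau_1,\ldots,\tau_q\vdash N:{\bf c}$ (where the proof variables $X_1,\ldots,X_q$ attached to $\tau_1,\ldots,\tau_q$ are chosen fresh), $q$ successive applications of $({\to}I)$, from the inside out, produce
$$\Gamma\ \vdash\ \lambda X_1{:}\tau_1.\,\cdots\,\lambda X_q{:}\tau_q.\,N\ :\ \tau_1\to\cdots\to\tau_q\to{\bf c}\,,$$
i.e.\ a derivation of $\Gamma\vdash\varphi$. Formally this is a trivial induction on $q$, with $q=0$ as base case.

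For the converse ($\Rightarrow$) I would argue by induction on $q$, appealing to long normal forms. If $\Gamma\vdash\varphi$ is derivable then by Lemma~\ref{lemma:lnf} there is an lnf $M$ with $\Gamma\vdash M:\varphi$. When $q=0$ we have $\varphi={\bf c}$ and the conclusion is immediate. When $q\ge1$, $\varphi$ is the implication $\tau_1\to(\tau_2\to\cdots\to\tau_q\to{\bf c})$, so by the definition of lnf the term $M$ must be a proof abstraction $\lambda X_1{:}\tau_1.\,M'$ with $M'$ an lnf satisfying $\Gamma,X_1{:}\tau_1\vdash M':\tau_2\to\cdots\to\tau_q\to{\bf c}$; applying the induction hypothesis in the environment $\Gamma\cup\{\tau_1\}$ then gives $\Gamma,\tau_1,\tau_2,\ldots,\tau_q\vdash{\bf c}$, as desired. (Alternatively, and without invoking lnf: weakening is admissible, so $\Gamma\vdash\varphi$ yields $\Gamma,\tau_1,\ldots,\tau_q\vdash\tau_1\to\cdots\to\tau_q\to{\bf c}$, and since $({A}x)$ gives $\Gamma,\tau_1,\ldots,\tau_q\vdash\tau_i$ for each $i$, $q$ uses of $({\to}E)$ deliver ${\bf c}$.)

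I do not anticipate a genuine obstacle. The two points that call for a moment's care are the bookkeeping of proof-variable names when moving between the ``environment as a list of $\tau_i$'' and the ``environment as a set of formulas'' conventions, and, in the lnf argument, the use of the (implicit) fact that every long normal form whose type is an implication is a proof abstraction --- which is exactly the second clause of the definition of lnf read as an exhaustive case analysis on the type.
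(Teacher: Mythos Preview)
Your proposal is correct and follows essentially the same route as the paper: the paper's proof is a one-liner observing that a long normal proof of $\varphi$ must have the shape $\lambda X_1{:}\tau_1\ldots\lambda X_q{:}\tau_q.\,N$ with $N$ of type~${\bf c}$, which is exactly your $(\Rightarrow)$ argument unrolled, while the $(\Leftarrow)$ direction is left implicit in the paper. Your write-up is simply more explicit (and your alternative $(\Rightarrow)$ via weakening and $({\to}E)$ is a harmless bonus).
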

\begin{proof}
A long normal proof of $\varphi$ (i.e., a lnf of type~$\varphi$)
must have the form $\lambda X_1\,{:}\,\tau_1\dots\lambda X_q\,{:}\,\tau_q.\,N$,
where the term $N$ has type~${\bf c}$.
\end{proof}

Thus to determine if a given $\Sigma_1$-formula is provable, one 
has to consider judgments $\Gamma \vdash {\bf c}$, where all 
members of~$\Gamma$ are in $\Pi_1$, i.e., of the form
$\forall\vec y_1(\sigma_1\to \forall\vec y_2(\sigma_2\to 
\cdots\to\forall\vec y_k(\sigma_k\to{\bf b})\ldots))$, where 
$\sigma_i\in \Sigma_1$, and $\,{\bf b}$ is an atom. The atom $\,{\bf b}$
is called the {\it target\/} of~$\psi$.

From Lemmas~\ref{lemma:lnf} and~\ref{genera1} we now obtain 
the following ``generation lemma'':

\begin{lemma}\label{genera22}
Let ${\bf c}$ be an atom, and let $\Gamma$ consists of $\Pi_1$ formulas. 
Then $\Gamma \vdash {\bf c}$
is derivable iff there is
a formula in~$\Gamma$ of the form
$\psi=\forall\vec y_1(\sigma_1\to \forall\vec y_2(\sigma_2\to 
\cdots\to\forall\vec y_k(\sigma_k\to{\bf b})\ldots))$,
and variables $\vec x=\vec x_1\ldots,\vec x_k$ 
such that ${\bf b}[\vec y:=\vec x]={\bf c}$, where 
$\vec y=\vec y_1\ldots\vec y_k$, and, for~$i=1\,\ldots,k$,
if \mbox{$\sigma_i=\tau_1^i\to\tau_2^i\to\cdots\to\tau_{q_i}^i\to{\bf a}_i$}
then all judgments 
$\Gamma,\tau_1^i[\vec y:=\vec x],\ldots,\tau_{q_i}^i[\vec y:=\vec x]\vdash 
{\bf a}_i[\vec y:=\vec x]$
are derivable.
\end{lemma}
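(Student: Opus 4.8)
The plan is to read the statement straight off Lemma~\ref{lemma:lnf} and Lemma~\ref{genera1} by inspecting the shape of a long normal proof of an atom. First I would observe that, by Lemma~\ref{lemma:lnf}, $\Gamma\vdash{\bf c}$ holds \iff there is an lnf $M$ with $\Gamma\vdash M:{\bf c}$; and since ${\bf c}$ is atomic, by the definition of lnf such an $M$ cannot be an abstraction (whose type is an implication or a $\forall$), so it must be a spine $M=X\,N_1\cdots N_n$ headed by a proof variable $X$. As $X$ occurs free in $M$, we have $(X:\psi)\in\Gamma$ for a $\Pi_1$-formula $\psi$, which I would write, allowing empty quantifier blocks and $k\ge 0$, as $\psi=\forall\vec y_1(\sigma_1\to\forall\vec y_2(\sigma_2\to\cdots\to\forall\vec y_k(\sigma_k\to{\bf b})\ldots))$, with each $\sigma_i\in\Sigma_1$ of the form $\tau_1^i\to\cdots\to\tau_{q_i}^i\to{\bf a}_i$.

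\textbf{Left to right.} Given such an $M$, I would walk down the spine $N_1,\ldots,N_n$ against the type of $X$, using that the environment does not change along a spine of applications: each object-variable argument consumes one $\forall$ via $({\forall}{E})$, so a maximal block of them fixes an instantiation $\vec y_i:=\vec x_i$, and each proof argument $N_j$ consumes a premise via $({\to}{E})$, so that $\Gamma\vdash N_j:\sigma_i[\vec y:=\vec x]$ for the premise $\sigma_i$ it matches. For the end type to be the atom ${\bf c}$ the spine must be exactly exhausted, which forces ${\bf b}[\vec y:=\vec x]={\bf c}$. Finally, applying Lemma~\ref{genera1} to $\Gamma\vdash N_j:\sigma_i[\vec y:=\vec x]=\tau_1^i[\vec y:=\vec x]\to\cdots\to\tau_{q_i}^i[\vec y:=\vec x]\to{\bf a}_i[\vec y:=\vec x]$ gives the derivability of $\Gamma,\tau_1^i[\vec y:=\vec x],\ldots,\tau_{q_i}^i[\vec y:=\vec x]\vdash{\bf a}_i[\vec y:=\vec x]$, which is exactly what the right-hand side demands.

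\textbf{Right to left.} Conversely, given $\psi$, the $\vec x_i$, and derivability of all the displayed judgments, I would first invoke Lemma~\ref{genera1} in the opposite direction to produce, for each $i$, a term $N_i$ with $\Gamma\vdash N_i:\sigma_i[\vec y:=\vec x]$, and then assemble the spine $X\,\vec x_1\,N_1\,\vec x_2\,N_2\cdots\vec x_k\,N_k$; repeated application of $({\forall}{E})$ and $({\to}{E})$ gives it type ${\bf b}[\vec y:=\vec x]={\bf c}$ in $\Gamma$, so $\Gamma\vdash{\bf c}$. Empty $\vec y_i$-blocks and the case $q_i=0$ simply drop the corresponding applications; the case $k=0$ is the degenerate one where $\psi={\bf b}={\bf c}$ and $M=X$.

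The only place that needs care — routine but notationally heavy — is the matching in the left-to-right direction: one must check that the alternation of object- and proof-applications along the lnf spine genuinely corresponds to the alternation of $\forall$-prefixes and $\Sigma_1$-premises in $\psi$, and that the successive single-block instantiations combine into the single simultaneous substitution $[\vec y:=\vec x]$. The latter is legitimate because $\sigma_i$ may mention only $\vec y_1\cdots\vec y_i$ among the variables in $\vec y$, and ${\bf b}$ only $\vec y=\vec y_1\cdots\vec y_k$, so the blocks are substituted independently; and that the object arguments are bare variables rather than compound terms is already built into the grammar of proof terms. Nothing beyond Lemmas~\ref{lemma:lnf} and~\ref{genera1} is required.
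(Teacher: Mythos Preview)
Your proposal is correct and takes essentially the same approach as the paper, which merely says the lemma follows ``from Lemmas~\ref{lemma:lnf} and~\ref{genera1}'' without spelling out any details. Your spine analysis of the lnf of atomic type, together with the application of Lemma~\ref{genera1} to each $\sigma_i[\vec y:=\vec x]$, is exactly the intended unpacking.
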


Lemma~\ref{genera22} defines a Ben-Yelles style proof-search algorithm, which
can be seen as a behaviour of an alternating automaton. One interprets 
 $\Gamma \vdash {\bf c}$ as a configuration of the machine, where 
{\bf c} is the internal state and $\Gamma$ is (write-only) memory or database. 
A computation  step consists of a nondeterministic choice of 
an assumption~$\psi$ from the database, together with an appropriate 
substitution, and of a universal split into as many computation branches as 
there are premises in~$\psi$. Each of these commences with the proof goal
(internal state) given by a target of one of the premises in~$\psi$.

\section{SMS entailment as provability}\label{tamtam}
Given a program $\P$ with negations, and a ground atom $\rOm$,
we define in this section a formula $\varphi$ such that
$\P$ entails $\rOm$ under stable model semantics \iff $\varphi$ is
intuitionistically provable. We begin with some preparatory 
considerations regarding the (now fixed) program~$\P$.

We assume that $D_\P=\{c_1,\ldots,c_m\}$, and 
every clause $K$ of~$\P$ is written as

\hfil $\rR(\vec u)\klaus \rP_1(\vec v_1),\ldots,\rP_r(\vec v_r),
\neg\rS_1(\vec w_1),\ldots,\neg\rS_s(\vec w_s)$,

\noindent
where $\vec u,\vec v_1,\ldots,\vec v_r,\vec w_1,\ldots,\vec w_s$ are 
some vectors of variables and constants, possibly with repetitions.
The {\it arity\/} of clause~$K$ is the arity of~$\rR$ 
(the length
of~$\vec u$). 

Let $\vec x$ be the sequence of all variables~in~$\vec u$ 
taken without repetitions. Then let
$\vec y$ stand for the sequence of all variables
 occurring in 
$\vec v_1,\ldots,\vec v_r,\vec w_1,\ldots,\vec w_s$,
but not occurring in~$\vec x$, 
listed without repetitions in some fixed order. We may now 
denote the clause~$K$ by $K(\vec x,\vec y)$. The vector $\vec y$
consists of variables that only occur at the rhs of~$K$.

Recall that program $\overline \P$ is obtained from~$\ground{\P}$ 
by replacing all occurrences 
of negative atoms $\neg\rP(\vec c\ciut)$ by~$\overline\rP(\vec c\ciut)$.
The program $\overline \P$ thus consists of clauses $K(\vec a,\vec b\ciut)$ 
of the form 

\hfil $\rR(\vec e\ciut)\klaus \rP_1(\vec c_1),\ldots,\rP_r(\vec c_r),
\overline\rS_1(\vec d_1),\ldots,\overline\rS_s(\vec d_s)$, 

\noindent
obtained from $K(\vec x,\vec y)$ by substituting elements 
$\vec a,\vec b$ of $D_\P$ for $\vec x,\vec y$. 
Vectors  $\vec e$,~$\vec c_i$, and $\vec d_i$ are appropriately selected from 
$\vec a,\vec b$.  More precisely, if $S$ denotes the 
substitution $[\vec x,\vec y := \vec a,\vec b]$, then $\vec e=\vec u[S]$,
$\vec c_i=\vec v_i[S]$ and $\vec d_i=\vec w_i[S]$.

The vocabulary of $\varphi$ consists of~$D_\P$ 
and the following predicate symbols:

\begin{itemize}
\item For every predicate symbol $\rP$ in $\P$, there are four 
symbols $\rP$, $\overline\rP$, $\rP!$, and $\rP?$, each of the same
arity as~$\rP$.

\item For every pair $\rP$, $\rQ$ of predicate symbols, of arity $m,n$
respectively, there 
is a~predicate $\rP\rQ$ of arity $m+n$.

\item For every clause $K(\vec x,\vec y)$ of arity $l$ with 
$|\vec y|=m$, and for every $i=0,\ldots,m$ there is a~predicate 
$\rK^i$ of arity $l+i$, and a nullary predicate $\overline\rK^i$.

\item In addition we have six nullary predicates 
$\lupa$, $\rOm$, $\rA$, $\rB$, $\circ$, and $\bullet$.
\end{itemize}
The arity of predicates occurring in $\varphi$ depends on the 
arity of atoms in~$\P$. However, all targets of the implication 
subformulas in~$\varphi$ are nullary. As such, $\varphi$ is an ``easy
formula'' and by~\cite{suz16} it is translatable in polynomial time
to a~formula~$\varphi'$ with unary predicates so that $\varphi'$ is provable 
\iff so is~$\varphi$.

\begin{figure}[h!]
\begin{enumerate} \normalsize
\item\label{aksj11} 
$\forall\vec z((\rR(\vec z\ciut)\to{\lupa})
\to(\overline\rR(\vec z\ciut)\to{\lupa})
\to {\lupa})$, \quad for every predicate symbol $\rR$ of~$\P$.

\item\label{aksj22} $\rOm\to\lupa$, ~~$\rA\to \lupa$, ~~and $\rB\to \lupa$.

\item\label{aksj33} $\forall\vec z.\, 
\overline\rR(\vec z\ciut)\to (\rR!(\vec z\ciut)\to\bullet)\to\rA$,
\quad for every predicate symbol $\rR$ of~$\P$.

\item\label{aksj00} 
$\forall\vec z.\, \rR(\vec z\ciut)\to (\rR?(\vec z\ciut)\to\circ)\to \rB$,
\quad for every predicate symbol $\rR$ of~$\P$.

\item\label{aksj44} $\forall\vec x\vec y.\,\rR!(\vec u)\to 
(\rP_1!(\vec v_1)\to\bullet)\to\cdots\to(\rP_r!(\vec v_r)\to\bullet)\to
\overline\rS_1(\vec w_1)\to\cdots\to\overline\rS_t(\vec w_s)\to\bullet$, \\
for every 
$K(\vec x,\vec y)$ of the form 
\mbox{$\rR(\vec u)\klaus \rP_1(\vec v_1),\ldots,\rP_r(\vec v_r),
\neg\rS_1(\vec w_1),\ldots,\neg\rS_t(\vec w_t)$}.

\item\label{aksj55} $\forall\vec z.\,\rR?(\vec z\ciut)\to%
  (\rK^0_1(\vec z\ciut)\to\overline\rK^0_1)%
  \to\cdots\to 
  (\rK^0_l(\vec z\ciut)\to\overline\rK^0_l)
  \to\circ$,\quad for every predicate symbol $\rR$ such that
$K_1,\ldots,K_l$ are all clauses with target~$\rR$.

\item\label{aksjaaa} $\forall\vec z_1\vec z_2.\,\rK^0(\vec z_1,d,\vec z_2)\to
\overline\rK^0$,  \quad for every constant $d\neq c$, and every 
clause $K$ with target~$\rR(\vec u)$, where $c$ occurs in~$\vec u$ as the
$|\vec z_1|+1$-st argument.

\item\label{aksjbbb}$\forall\vec z_1\vec z_2\vec z_3.\,
\rK^0(\vec z_1,c,\vec z_2,d,\vec z_3)\to\overline\rK^0$,  \quad for every 
pair $c,d$ of different constants, and every 
clause~$K$ with target~$\rR(\vec u)$, where the 
$|\vec z_1|+1$-st and the $|\vec z_1|+1+|\vec z_2|+1$-st 
arguments in $\vec u$ are the same variable.

\item\label{aksj66} $\forall\vec z\vec\nu.\,
\rK^i(\vec z,\vec\nu)\to
(\rK^{i+1}(\vec z,\vec\nu,c_1)\to \overline\rK^{i+1})\to\cdots\to
(\rK^{i+1}(\vec z,\vec\nu,c_m)\to \overline\rK^{i+1})\to \overline\rK^i$, \\
for every clause $K(\vec x,\vec y)$, and every prefix $\vec\nu$ of~$\vec y$
such that $|\vec\nu|=i<|\vec y|$.\\
(Recall that $c_1,\ldots,c_m$ are all the constants occurring in~$\P$.)

\item\label{aksj77} $\forall\vec x\vec y.\,
\rK^m(\vec u,\vec y)\to
(\rP?(\vec v)\to  \rR\rP(\vec u,\vec v) \to \circ)\to
    \overline\rK^m$,  \quad
for every clause $K(\vec x,\vec y)$ with target~$\rR(\vec u)$ 
such that $\rP(\vec v)$ occurs at the rhs of~$K(\vec x,\vec y)$.

\item\label{aksj88} $\forall\vec x\vec y.\,
\rK^m(\vec u,\vec y)\to S(\vec w)\to \overline\rK^m$, \quad
for every clause $K(\vec x,\vec y)$ with target~$\rR(\vec u)$ 
such that $\neg\rS(\vec w)$ occurs at the rhs of~$K(\vec x,\vec y)$.

\item\label{aksj99} 
$\forall\vec z\vec y\vec w (\rR\rP(\vec z,\vec y)\to \rP\rQ(\vec y,\vec w)\to 
  (\rR\rQ(\vec z,\vec w)\to\circ)
  \to\circ)$, \quad
for every predicate symbols $\rR,\rP,\rQ$ of $\P$ of respective arity 
$|\vec z|$, $|\vec y|$, $|\vec w|$.

\item\label{aksjAA} $\forall\vec z (\rP\rP(\vec z,\vec z\ciut)\to \circ)$,
 \quad for every  predicate symbol $\rP$ of $\P$.
\end{enumerate}
\caption{Axioms of the formula~$\varphi$\label{figaxioms}.
Vectors $\vec z,\vec z_i$ always consist of different variables.}
\end{figure}

\noindent The formula 
$\varphi$ has the form $\psi_1\to\cdots\to\psi_d\to\lupa$, where 
$\lupa$ is a nullary predicate symbol, and $\psi_1,\ldots,\psi_d$ are
closed formulas, called {\it axioms\/}, listed in Figure~\ref{figaxioms}.
We now explain how these axioms can be used to prove~$\varphi$.

First we have the initial axioms~(\ref{aksj11}), 
for every predicate letter~$\rR$.
These axioms can be applied towards proving the main goal~$\lupa$. 
Every time an axiom of type~(\ref{aksj11}) is used, the proof universally 
splits into two branches and at each of them a ground atom either of the 
form $\rR(\vec c\ciut)$ 
or of the form $\overline\rR(\vec c\ciut)$ is added to the proof environment. 
At the end of this phase we are facing the task of proving $\lupa$ under
any possible choice of positive or overlined ground facts. More precisely,
every branch of the proof leads to a judgment $\Gamma\vdash\lupa$, where
$\Gamma$ contains a~selection of ground atoms in addition to the axioms 
of~$\varphi$. 
Ideally we could make exactly as many choices as is needed to fully determine
a~different model~$\Mm$ at every branch. This happens when 
axiom~(\ref{aksj11}) is 
used at every branch exactly once for every predicate letter $\rR$ and 
every substitution $\vec z := \vec c$. Then the set~$\Gamma$ contains
either $\rR(\vec c\ciut)$ or $\overline\rP(\vec c\ciut)$, for every~$\rR$
and every~$\vec c$.  
 We could achieve this in the propositional case~\cite{su-asp} by using 
all propositional atoms as proof goals one after one in a deterministic
fashion. As we now work with nullary goals, this is impossible, and we must 
accept nondeterminism. So in general on any given
branch we may either have a ``partial model'' (when neither $\rR(\vec c\ciut)$ 
nor~$\overline\rR(\vec c\ciut)$ was chosen) or an ``inconsistency'' (when 
both are chosen). In the latter case a certain branch of the proof 
leads to a judgment of the form
$\Gamma,\rR(\vec c\ciut),\overline\rR(\vec c\ciut)\vdash\lupa$. 
Observe though that such an inconsistency is not 
as dangerous as it may appear. It only happens when the same 
axiom~(\ref{aksj11})
was used twice for the same ground substitution $\vec z:=\vec c$. 
Then in addition to the ``inconsistent'' branch we have 
two other branches where both $\rR(\vec c\ciut)$ 
and $\overline\rR(\vec c\ciut)$ were selected in a~consistent manner.
A~successful proof must therefore always handle all the consistent cases, 
i.e., apply to all models. 

The goal $\lupa$ should be provable when $\P$ entails $\rOm$. The 
three axioms~(\ref{aksj22}):

\hfil $\rOm\to\lupa$,\hfil $\rA\to \lupa$,\hfil $\rB\to \lupa$,

\noindent
where $\rA,\rB$ are fixed nullary predicate letters, 
correspond to the three ways in which the entailment $\P\models_{\sms} \rOm$
may hold in a stable model: either $\rOm$ holds, or the model is 
unstable because $\overline\P\cup\overline\Mm$ proves too much 
($\P$ is unsound for~$\Mm$), 
or the model is unstable because $\overline\P\cup\overline\Mm$ 
does not prove what 
is needed ($\P$ is incomplete for~$\Mm$). In the first case, a proof 
of $\varphi$ should be completed with help of the axiom $\rOm\to\lupa$,
as the atom~$\rOm$ (a member of $B(\P)$) is available as an assumption.
Otherwise the proof goal is set to $\rA$ or $\rB$, respectively. \bigbreak

\paragraph{The unsound case~$\rA:$ }~
For the unsound case we have a nullary predicate $\bullet$ and 
a~collection of axioms~(\ref{aksj33}) of the form 

\hfil  $\forall\vec z.\, 
\overline\rR(\vec z\ciut)\to (\rR!(\vec z\ciut)\to\bullet)\to\rA$.

\noindent
Only one of axioms (3) can be used at every branch of any proof and only
once. In our automata-theoretic interpretation, \relax
it passes control from $\rA$ to $\bullet$, adding $\rR!(\vec c\ciut)$ to
the memory, for some instance~$\vec c$ of~$\vec z$. 
The following proof should succeed if~$\rR(\vec c\ciut)\in I(\P,\Mm)$,
despite that $\overline\rR(\vec c\ciut)\in\overline\Mm$.

The intuitive meaning of the predicate $\rR!(\vec c\ciut)$
is: ``the atom  $\rR(\vec c\ciut)$ has been visited as a~proof goal
in a derivation from $\overline\P\cup\overline\Mm$\ciut'' 
(cf.~Lemma~\ref{dsewawqd}).

Now consider any clause of~$\P$ with target $\rR$, e.g., 
$\rR(\vec x)\klaus  \rP(\vec x),\rQ(\vec x),\neg \rS(\vec x)$,
where~$\vec x$ consists only of variables  and has no repetitions. (That 
is assumed for simplicity of the example.) 
With such a clause we associate an assumption of type~(\ref{aksj44}) which
in this case takes the form 

\hfil
 $\forall\vec x.\, \rR!(\vec x)\to (\rP!(\vec x\ciut)\to\bullet)\to 
(\rQ!(\vec x\ciut)\to\bullet)\to \overline \rS(\vec x\ciut)\to\bullet$,

\noindent and enforces 
a universal split to two new tasks: (i) prove $\bullet$ under the
assumption~$\rP!(\vec c\ciut)$, and (ii) prove $\bullet$ under the 
assumption~$\rQ!(\vec c\ciut)$, provided  $\rR(\vec c\ciut)$ has 
already been visited 
as a proof goal and 
$\overline \rS(\vec c\ciut)$ holds in the model. Axioms of type~(\ref{aksj44})
are now repeatedly applied in every branch of the proof  until $\bullet$ 
can be derived without creating a new proof task. This happens when 
an assumption of the form $\rR!(\vec d\ciut)$ corresponds to a clause
with no positive atom at the rhs. The next lemma makes this explicit.

For a  model $\Mm$, 
let $\Gamma_\Mm$ be the union of~$\Mm\cup\overline\Mm$
and the set of all axioms of~$\varphi$.\bigbreak

\begin{lemma}\label{kkff4433} Let $\Delta$ consist of assumptions of the form
$\rR!(\vec c\ciut)$. Then $\Gamma_\Mm,\Delta\vdash \bullet$ holds \iff
there is an atom $\rR!(\vec c\ciut)\in\Delta$ such that 
$\overline\P\cup\overline\Mm\vdash\rR(\vec c\ciut)$.
\end{lemma}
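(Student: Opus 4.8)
\emph{The plan is} to prove both directions by induction: the implication $(\Leftarrow)$ by induction on the height of a derivation of $\rR(\vec c)$ from $\overline\P\cup\overline\Mm$, and $(\Rightarrow)$ by induction on the size of a long normal proof of $\bullet$ from $\Gamma_\Mm,\Delta$ (available by Lemma~\ref{lemma:lnf}); in both cases $\Delta$ is kept universally quantified, so that the auxiliary hypotheses that appear along the way can be absorbed into it. Everything rests on a dissection, via the generation lemma (Lemma~\ref{genera22}), of the possible long normal proofs of the atom $\bullet$ from $\Gamma_\Mm,\Delta$. Since $\bullet$ occurs in $\Mm\cup\overline\Mm\cup\Delta$ neither as a member nor as a target, and the only axioms of $\varphi$ whose target is $\bullet$ are those of type~(\ref{aksj44}), every such proof uses an axiom~(\ref{aksj44}) for some clause $K(\vec x,\vec y)$ of the form \mbox{$\rR(\vec u)\klaus\rP_1(\vec v_1),\ldots,\rP_r(\vec v_r),\neg\rS_1(\vec w_1),\ldots,\neg\rS_t(\vec w_t)$}, instantiated by a substitution $S=[\vec x,\vec y:=\vec a,\vec b]$. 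This substitution is necessarily ground: $\Gamma_\Mm,\Delta$ and the goal $\bullet$ contain no free object variables, and since every subgoal descending from $\bullet$ is atomic, no $({\forall}I)$ step ever fires, so no object variable is introduced. The instance of axiom~(\ref{aksj44}) then generates exactly the tasks $\Gamma_\Mm,\Delta\vdash\rR!(\vec u[S])$; $\ \Gamma_\Mm,\Delta,\rP_i!(\vec v_i[S])\vdash\bullet$ for $i=1,\ldots,r$; and $\Gamma_\Mm,\Delta\vdash\overline\rS_j(\vec w_j[S])$ for $j=1,\ldots,t$. As no axiom of $\varphi$ has a target of the form $\rR!(\cdots)$ or $\overline\rS(\cdots)$, the first task forces $\rR!(\vec u[S])\in\Delta$ and the last ones force $\overline\rS_j(\vec w_j[S])\in\overline\Mm$, i.e.\ $\rS_j(\vec w_j[S])\notin\Mm$; and $\rR(\vec u[S])\klaus\rP_1(\vec v_1[S]),\ldots,\rP_r(\vec v_r[S]),\overline\rS_1(\vec w_1[S]),\ldots,\overline\rS_t(\vec w_t[S])$ is precisely the clause of $\overline\P$ obtained from $K$ under $S$.

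For $(\Leftarrow)$, assume $\rR!(\vec c)\in\Delta$ and $\overline\P\cup\overline\Mm\vdash\rR(\vec c)$. By the standard forward-chaining reading of Horn provability (cf.\ Lemma~\ref{dsewawqd}), a derivation of $\rR(\vec c)$ ends with a clause $\rR(\vec c)\klaus\rP_1(\vec c_1),\ldots,\rP_r(\vec c_r),\overline\rS_1(\vec d_1),\ldots,\overline\rS_t(\vec d_t)$ of $\overline\P$ — an instance under some ground $S$ of a clause $K$ with $\vec u[S]=\vec c$ — such that every $\overline\rS_j(\vec d_j)\in\overline\Mm$ and every $\rP_i(\vec c_i)$ has a strictly shorter derivation. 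The induction hypothesis applied with $\Delta\cup\{\rP_i!(\vec c_i)\}$ in place of $\Delta$ gives $\Gamma_\Mm,\Delta,\rP_i!(\vec c_i)\vdash\bullet$, hence $\Gamma_\Mm,\Delta\vdash\rP_i!(\vec c_i)\to\bullet$. Feeding these proofs, together with the assumption $\rR!(\vec c)\in\Delta$ and the facts $\overline\rS_j(\vec d_j)\in\overline\Mm\subseteq\Gamma_\Mm$, into the instance of axiom~(\ref{aksj44}) for $K$ at $S$ yields $\Gamma_\Mm,\Delta\vdash\bullet$ (for $r=0$ the axiom is applied directly).

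For $(\Rightarrow)$, take a long normal proof of $\bullet$ from $\Gamma_\Mm,\Delta$ and apply the dissection above: it is built from axiom~(\ref{aksj44}) for a clause $K$ at a ground $S$, with $\rR!(\vec u[S])\in\Delta$ and $\rS_j(\vec w_j[S])\notin\Mm$ for all $j$, so the corresponding clause of $\overline\P$ has all its overlined body atoms in $\overline\Mm$. Each subproof of $\bullet$ from $\Gamma_\Mm,\Delta,\rP_i!(\vec v_i[S])$ is strictly smaller, so by the induction hypothesis $\Delta\cup\{\rP_i!(\vec v_i[S])\}$ contains an $\rR!$-atom that is derivable from $\overline\P\cup\overline\Mm$. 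If for some $i$ that witnessing atom already belongs to $\Delta$ we are done; otherwise it is $\rP_i!(\vec v_i[S])$ for every $i=1,\ldots,r$, i.e.\ $\overline\P\cup\overline\Mm\vdash\rP_i(\vec v_i[S])$ for all $i$. Together with the facts $\overline\rS_j(\vec w_j[S])\in\overline\Mm$, one resolution step with the above clause of $\overline\P$ gives $\overline\P\cup\overline\Mm\vdash\rR(\vec u[S])$, while $\rR!(\vec u[S])\in\Delta$, as required (the case $r=0$ is the base case, handled directly the same way).

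The main obstacle is the bookkeeping in $(\Rightarrow)$. Because axiom~(\ref{aksj44}) grows the environment with the fresh hypotheses $\rP_i!(\vec v_i[S])$, the induction must be over all $\Delta$, and one has to check that whenever a subproof of $\bullet$ ``bottoms out'' on one such fresh hypothesis, it does so in exact correspondence with a sub-derivation of the matching positive body atom $\rP_i(\vec v_i[S])$ in $\overline\P\cup\overline\Mm$ — this is the point where the $r$-fold universal split imposed by axiom~(\ref{aksj44}) is matched against the branching of a forward-chaining proof tree, and it is what ultimately makes the predicate letters $\rR!$ behave like ``visited as a proof goal''. The other point that needs care, already used in the dissection, is the claim that every $({\forall}E)$ instance in a proof of $\bullet$ is ground; this holds because in the relevant fragment every subgoal reachable from $\bullet$ is atomic, so $({\forall}I)$ never fires and no object variable is ever introduced into the derivation.
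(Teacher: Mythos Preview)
Your proof is correct and follows essentially the same approach as the paper: both directions are by induction (on the long normal proof of $\bullet$ for $(\Rightarrow)$, on the Horn derivation of $\rR(\vec c\ciut)$ for $(\Leftarrow)$), with the same case analysis on axiom~(\ref{aksj44}) and the same ``either the witness is already in $\Delta$, or it is the freshly added $\rP_i!$'' dichotomy. Your explicit discussion of why the instantiation must be ground and why $\Delta$ must be kept universally quantified is a welcome addition the paper leaves implicit.
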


\begin{proof} $(\To)$ Induction wrt the size of a long normal proof
(cf.~Lemma~\ref{lemma:lnf}). 
To prove $\bullet$ one must use an instance of one of the 
axioms~(\ref{aksj44}), say of the form:\vspace{1mm}

\noindent \hfill
$\rR!(\vec a\ciut)\kto (\rP_1!(\vec c_1\ciut)\kto\bullet)\kto \cdots
\kto(\rP_r!(\vec c_r\ciut)\kto\bullet)\kto 
\overline\rS_1(\vec d_1\ciut)\kto\cdots\kto 
\overline\rS_t(\vec d_s\ciut)\kto\bullet$\hfill (*)\vspace{1mm}

\noindent Then the atoms
$\rR!(\vec a), 
\overline\rS_1(\vec d_1),\ldots,\overline\rS_s(\vec d_s)$  are provable 
from $\Gamma_\Mm,\Delta$, and this can only happen when 
$\rR!(\vec a)\in \Delta$
and $\overline\rS_1(\vec d_1),\ldots,\overline\rS_s(\vec d_s)\in\overline\Mm$, 
as there are no other assumptions with these targets.

In addition there are proofs of 
$\Gamma_\Mm,\Delta,\rP_i!(\vec c_i\ciut)\vdash\bullet$\,, for $i=1,\ldots,r$.
By the ind.~hypothesis, there are atoms
$\rR_i!(\vec e_i\ciut)\in \Delta,\rP_i!(\vec c_i\ciut)$ such that 
$\overline\P\cup\overline\Mm\vdash\rR_i(\vec e_i\ciut)$. 
Should any of these belong
to~$\Delta$, we are done. Otherwise $\rR_i(\vec e_i\ciut)=
\rP_i(\vec c_i\ciut)$, for all~$i$. (This covers the ``base case'' when~$r=0$
and the induction hypothesis is not used.)
Then we can derive $\rR(\vec a)$ from the 
appropriate clause in~$\overline\P\cup\overline\Mm$, which is
$\rR(\vec a)\klaus \rP_1(\vec c_1\ciut),\ldots,\rP_r(\vec c_r\ciut),
\overline\rS_1(\vec d_1\ciut),\ldots,\overline\rS_s(\vec d_s\ciut)$.

$(\Ot)$ Induction \wrt the size of a long normal proof.
The proof must use a clause 
$\rR(\vec a)\klaus \rP_1(\vec c_1\ciut),\ldots,\rP_r(\vec c_r\ciut),
\overline\rS_1(\vec d_1\ciut),\ldots,\overline\rS_s(\vec d_s\ciut)$
which corresponds to an axiom (\ref{aksj44}) instance of the form~(*).
Then $\overline\rS_1(\vec d_1\ciut),\ldots,
\overline\rS_s(\vec d_s\ciut)\in\overline\Mm$
and $\overline\P\cup\overline\Mm\vdash\rP_i(\vec c_i\ciut)$, 
for every $i=1,\ldots,r$.
By the induction hypothesis we have 
$\Gamma_\Mm,\Delta,\rP_i!(\vec c_i\ciut)\vdash \bullet$\,, 
for all~$i$. This enables us to apply (*) 
to obtain a proof of~$\bullet$ from~$\Gamma_\Mm,\Delta$.\relax
\end{proof}

The following is an easy consequence 
of Lemmas~\ref{dsewawqd} and~\ref{kkff4433}.

\begin{lemma}\label{jwq7654}
We have $\Gamma_\Mm\vdash\rA$ \iff there is $\rR(\vec c\ciut)\in I(\P,\Mm)$
such that $\overline\rR(\vec c\ciut)\in\overline\Mm$. That is, 
$\Gamma_\Mm\vdash\rA$ \iff $\P$ is unsound for~$\Mm$.
\end{lemma}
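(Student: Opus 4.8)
The plan is to exploit, exactly as the sentence preceding the statement suggests, the long normal form lemma (Lemma~\ref{lemma:lnf}) together with the generation lemma (Lemma~\ref{genera22}), and then feed the outcome into Lemmas~\ref{kkff4433} and~\ref{dsewawqd}. First I would note that $\rA$ is a nullary atom and that the only axioms of $\varphi$ whose target is $\rA$ are those of type~(\ref{aksj33}), namely $\forall\vec z.\,\overline\rR(\vec z\ciut)\to(\rR!(\vec z\ciut)\to\bullet)\to\rA$, one for each predicate symbol $\rR$ of $\P$; none of the ground facts in $\Mm\cup\overline\Mm$ has target $\rA$ either. Hence, by Lemma~\ref{genera22}, the judgment $\Gamma_\Mm\vdash\rA$ is derivable \iff there is a predicate symbol $\rR$ of $\P$ and a ground instance $\vec c$ of $\vec z$ such that both $\Gamma_\Mm\vdash\overline\rR(\vec c\ciut)$ and $\Gamma_\Mm,\rR!(\vec c\ciut)\vdash\bullet$ hold. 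This single equivalence handles both directions of the lemma at once.

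Next I would discharge the two subgoals separately. For $\Gamma_\Mm\vdash\overline\rR(\vec c\ciut)$: inspecting the axioms in Figure~\ref{figaxioms} shows that no axiom has an atom of the form $\overline\rR(\cdot)$, with $\rR$ a predicate symbol of $\P$, as its target, and $\Mm$ consists of positive atoms only; so by Lemma~\ref{genera22} a long normal proof of $\overline\rR(\vec c\ciut)$ from $\Gamma_\Mm$ can only be a bare assumption variable, whence $\Gamma_\Mm\vdash\overline\rR(\vec c\ciut)$ \iff $\overline\rR(\vec c\ciut)\in\overline\Mm$. For $\Gamma_\Mm,\rR!(\vec c\ciut)\vdash\bullet$: this is precisely the setting of Lemma~\ref{kkff4433} with $\Delta=\{\rR!(\vec c\ciut)\}$, so it holds \iff $\overline\P\cup\overline\Mm\vdash\rR(\vec c\ciut)$, which by Lemma~\ref{dsewawqd} is equivalent to $\rR(\vec c\ciut)\in I(\P,\Mm)$.

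Combining the two observations gives that $\Gamma_\Mm\vdash\rA$ \iff there exist $\rR$ and $\vec c$ with $\overline\rR(\vec c\ciut)\in\overline\Mm$ and $\rR(\vec c\ciut)\in I(\P,\Mm)$, which is exactly the assertion that $\P$ is unsound for $\Mm$ (an atom declared false by $\Mm$ is nonetheless produced by the fixed-point computation), so the last sentence of the statement follows by unfolding the definition of unsoundness. The only real content here is the \emph{bookkeeping} check that (\ref{aksj33}) is the sole route to $\rA$ and that $\overline\rR(\vec c\ciut)$ is reachable only as an assumption; once these are verified the equivalence is immediate from the two cited lemmas, so I expect no genuine obstacle — the delicate induction has already been absorbed into Lemma~\ref{kkff4433}.
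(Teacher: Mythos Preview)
Your proposal is correct and follows essentially the same route as the paper: identify axiom~(\ref{aksj33}) as the only assumption in~$\Gamma_\Mm$ with target~$\rA$, reduce via the generation lemma to $\overline\rR(\vec c\ciut)\in\overline\Mm$ and $\Gamma_\Mm,\rR!(\vec c\ciut)\vdash\bullet$, and then invoke Lemma~\ref{kkff4433} with $\Delta=\{\rR!(\vec c\ciut)\}$ together with Lemma~\ref{dsewawqd}. The paper's proof treats the two directions separately rather than packaging them into a single biconditional chain, and it is slightly terser about why $\overline\rR(\vec c\ciut)$ must literally belong to~$\Gamma_\Mm$, but the argument is the same.
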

\begin{proof} 
~$(\To)$~~The only way to prove $\Gamma_\Mm\vdash\rA$ is by
applying one of the axioms~(\ref{aksj33}), for some $\rR(\vec c\ciut)$.
Then $\overline\rR(\vec c\ciut)\in\Gamma_\Mm$ 
and there must be a proof 
of \mbox{$\Gamma_\Mm,\rR!(\vec c\ciut)\vdash \bullet$}.
We apply Lemma~\ref{kkff4433} for the one element set 
$\Delta=\{\rR!(\vec c\ciut)\}$. \relax 

$(\Ot)$ Let $\P$ be unsound for~$\Mm$. There is an atom
$\rR(\vec c\ciut)\in I(\P,\Mm)$ such that 
$\overline\rR(\vec c\ciut)\in\overline\Mm$.
Then we have $\overline\P\cup\overline\Mm\vdash\rR(\vec c\ciut)$ by
Lemma~\ref{dsewawqd} whence $\Gamma_\Mm,\rR!(\vec c\ciut)\vdash\bullet$ 
by Lemma~\ref{kkff4433}. To prove~$\rA$ one uses an instance of 
axiom~(\ref{aksj33}).\relax
\end{proof}

\paragraph{The incomplete case~$\rB$:}~
The atom $\rB$ can be derived with the help of 
(\ref{aksj00}):

\hfil $\forall\vec z.\, \rR(\vec z\ciut)\to (\rR?(\vec z\ciut)\to\circ)\to\rB$.

\noindent
This axiom can be used when $\rR(\vec c\ciut)\in\Mm$, for some~$\vec c$,
and it yields the proof task $\Gamma_\Mm, \rR?(\vec c\ciut)\vdash\circ$. 
The meaning of the assumption $\rR?(\vec c\ciut)$ is ``we claim that 
$\rR(\vec c\ciut)$ cannot be derived from $\overline\P\cup\overline\Mm$''.
Now if $K_1,\dots,K_l$
are all clauses with target $\rR(\vec x)$ then we can use 
an instance of axiom~(\ref{aksj55}) 

\hfil  $\rR?(\vec c\ciut)\to%
  (\rK^0_1(\vec c\ciut)\to\overline\rK^0_1)%
  \to\cdots\to 
  (\rK^0_l(\vec c\ciut)\to\overline\rK^0_l)
  \to\circ.$

\noindent This results in $l$ parallel proof obligations 
$\Gamma_\Mm,\rR?(\vec c\ciut),\rK^0_i(\vec c\ciut)\vdash\overline\rK^0_i$, 
for all $i\leq l$. 
The intuitive understanding of the assumption $\rK^0_i(\vec c\ciut)$
is ``clause $K_i$ cannot be 
used to derive $\rR?(\vec c\ciut)$''. 

We need some care to handle the additional free variables
in~$K_1,\dots,K_l$, and this is why we need axioms of type~(\ref{aksj66}).
 For example, if~clause $K$ is one of $K_1,\dots,K_l$, and 
it has free variables $y_1,y_2,y_3$ occurring only at the rhs, then 
we need three additional assumptions to extend the vector~$\vec z$.
So we take the following steps (recall that 
$c_1,\ldots,c_m$ are all the constants):

\noindent$\forall\vec z.\,\rK^0(\vec z\ciut)\to 
(\rK^1(\vec z,c_1)\to \overline \rK^1)\to
\cdots\to(\rK^1(\vec z,c_m)\to \overline \rK^1)\to\overline\rK^0$

\noindent$\forall\vec z y_1.\, \rK^1(\vec z,y_1)\to 
(\rK^2(\vec z,y_1,c_1)\to \overline\rK^2)\to
\cdots\to(\rK^2(\vec z,y_1,c_m)\to \overline \rK^2)\to \overline \rK^1$

\noindent$\forall\vec z y_1 y_2.\, \rK^2(\vec z,y_1,y_2)\kto 
(\rK^3(\vec z,y_1,y_2,c_1)\kto \overline \rK^3)\kto
{\cdot}{\cdot}{\cdot}%
(\rK^3(\vec z,y_1,y_2,c_m)\kto 
\overline \rK^3)\kto 
\overline \rK^2$

Note that with~$m$ elements of the domain we could ``compress''
the three example formulas into one with~$m^3$ premises,
but we cannot do it in general: the number of added variables 
in a~clause can be proportional to the size of the program.
We have avoided that using $n$ formulas with $m$ premises each 
rather than one formula with $m^n$ premises.

The auxiliary formulas, targeted $\circ, \overline\rK^1, 
\overline\rK^2$,
generate $m^3$ independent processes, each with a~different 
``valuation'' of $y_1,y_2,y_3$. For every such valuation one of the 
assumptions targeted~$\overline \rK^3$ must be fired to guarantee 
that clause $K$
cannot be used towards a proof of $\rR(\vec x)$.

For example let $K$ has the form $\rR(\vec x)\klaus \rP(\vec w), 
\rQ(\vec\nu),\neg \rS(\vec v)$, where $\vec x$ are different variables.
Then all variables $\vec w,\vec\nu, \vec v$
are among $\vec x,\vec y$, and we can use axioms of type~(\ref{aksj77})
and~(\ref{aksj88}):

\hfil $  \begin{array}{l}
  \forall\vec x\vec y.\,\rK^3(\vec x,\vec y)\to
  (\rP?(\vec w)\to  \rR\rP(\vec x,\vec w) \to \circ)\to
    \overline\rK^3\\

  \forall\vec x\vec y.\,\rK^3(\vec x,\vec y)\to
  (\rQ?(\vec\nu)\to \rR\rQ(\vec x,\vec\nu)\to \circ)\to \overline \rK^3\\

  \forall\vec x\vec y.\,\rK^3(\vec x,\vec y)\to\; %
    \rS(\vec v)\to \overline \rK^3    
  \end{array}$

\noindent
Predicates denoted  $\rR\rP(\vec x,\vec w), \rR\rQ(\vec x,\vec\nu)$ 
``remember'' about the transition from the goal $\rR?(\vec x)$ to the goal 
$\rP?(\vec w)$, respectively $\rQ?(\vec\nu)$. The axiom~(\ref{aksj99})
ensures that this memory is transitive, and axiom~(\ref{aksjAA}) terminates
the proof construction when a loop in memory is revealed. 

An additional source of complication is that the lhs of a clause
can contain constants or repeated variables. Such a clause can therefore
be used only towards matching goals, and this is handled by 
axioms~(\ref{aksjaaa}) and~(\ref{aksjbbb}). For example, if clause~$K$
begins with $\rR(x,y,x)\klaus\cdots$, and an assumption $\rK^0(a,b,c)$ 
was introduced with $a\neq c$, 
then the proof can be immediately completed using
axiom~(\ref{aksjbbb}) of the form $\forall z.\, \rK^0(a,z,c)\to \overline\rK^0$.

We now show that a proof of $\rB$ is possible \iff our program~$\P$
is incomplete. We begin with the ``if'' part.

\paragraph{Refutations} Fix an $\Mm$. Let ${\bf a}\in B(\P)$. 
A~{\it refutation for\/}~${\bf a}$ is a possibly infinite 
tree labeled by members of~$B(\P)$ 
such that the root has label~${\bf a}$ and 
  as many immediate subtrees as there are clauses in 
$\ground{\P}$ with target~${\bf a}$. 
For every such clause, the corresponding subtree 
\begin{itemize} 
\item either is a refutation for ${\bf b}$,
for some positive atom ${\bf b}$ occurring at the rhs of the clause, 
\item or consists of a single node labeled $\overline{\bf b}$, 
where ${\bf b}\in \Mm$ and $\neg{\bf b}$ occurs at the rhs of the clause.
\end{itemize}

\begin{lemma}\label{uud352}
If $\rR(\vec c\ciut)\not\in I(\P,\Mm)$ then there exists a refutation 
for $\rR(\vec c\ciut)$.
\end{lemma}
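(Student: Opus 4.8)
The plan is to construct a refutation for $\rR(\vec c\ciut)$ directly, by unfolding, using essentially only that $I(\P,\Mm)$, being the \emph{least} fixed point of the operator $F$ associated with $\P^\Mm$, is in particular a fixed point: $F(I(\P,\Mm))=I(\P,\Mm)$. The one combinatorial fact I need is that, under the hypothesis $\rR(\vec c\ciut)\notin I(\P,\Mm)$, every clause $K:\rR(\vec c\ciut)\klaus \rP_1(\vec c_1),\ldots,\rP_r(\vec c_r),\neg\rS_1(\vec d_1),\ldots,\neg\rS_s(\vec d_s)$ of $\ground{\P}$ with target $\rR(\vec c\ciut)$ admits a \emph{blocking literal}: either a negative literal $\neg\rS_j(\vec d_j)$ on its rhs with $\rS_j(\vec d_j)\in\Mm$, or a positive atom $\rP_i(\vec c_i)$ on its rhs with $\rP_i(\vec c_i)\notin I(\P,\Mm)$. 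Indeed, were there no blocking literal, we would have $\rS_j(\vec d_j)\notin\Mm$ for every $j$ and $\rP_i(\vec c_i)\in I(\P,\Mm)$ for every $i$; the former condition means that $K$ is not deleted in the passage to $\P^\Mm$, where it becomes the clause $\rR(\vec c\ciut)\klaus \rP_1(\vec c_1),\ldots,\rP_r(\vec c_r)$, whose body lies entirely in $I(\P,\Mm)$, so $\rR(\vec c\ciut)\in F(I(\P,\Mm))=I(\P,\Mm)$, a contradiction. (In particular such a $K$ has a non-empty right-hand side.)

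Next I would fix, once and for all, for every atom ${\bf a}\in B(\P)\setminus I(\P,\Mm)$ and every clause of $\ground{\P}$ with target ${\bf a}$, one blocking literal (say, the leftmost one), and then define the refutation for $\rR(\vec c\ciut)$ by corecursion. Its root is labelled $\rR(\vec c\ciut)$, with one immediate subtree for each clause $K$ of $\ground{\P}$ with target $\rR(\vec c\ciut)$ --- so the root is simply a leaf if there are no such clauses. If the blocking literal chosen for $K$ is $\neg\rS_j(\vec d_j)$, the corresponding subtree is a single node labelled $\overline\rS_j(\vec d_j)$; this is legitimate because $\rS_j(\vec d_j)\in\Mm$ and $\neg\rS_j(\vec d_j)$ occurs on the rhs of $K$. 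If it is $\rP_i(\vec c_i)$, the subtree is the refutation being built for $\rP_i(\vec c_i)$, which is legitimate because $\rP_i(\vec c_i)\notin I(\P,\Mm)$, so it falls under the same construction, and $\rP_i(\vec c_i)$ is, as required, a positive atom occurring on the rhs of $K$. Checking this tree against the definition of a refutation shows that it is indeed a refutation for $\rR(\vec c\ciut)$.

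The only point that genuinely needs care --- and the main obstacle --- is termination: following positive blocking literals the unfolding may descend forever, producing an infinite tree. This is exactly why refutations are defined to permit infinite trees, so no well-foundedness or inductive measure is available or needed, and the construction above is to be read coinductively. (By contrast, a derivation from $\overline\P\cup\overline\Mm$ is finite whenever it exists, so the infinitary behaviour is confined to the refutation side.)
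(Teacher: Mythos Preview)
Your proposal is correct and follows essentially the same approach as the paper: show that every ground clause with head $\rR(\vec c\ciut)$ has a blocking literal (using that $I(\P,\Mm)$ is a fixed point of~$F$), then build the refutation by coinduction. The paper's proof is a three-line sketch that states exactly this; your version simply spells out the contradiction argument and the corecursive construction in more detail, including the explicit choice of blocking literals, which the paper leaves implicit.
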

\begin{proof}
Suppose $\rR(\vec c\ciut)\not\in I(\P,\Mm)$. Then for any clause 
of $\ground{\P}$ of the form \relax

\hfil $\rR(\vec c\ciut)\klaus \rP_1(\vec c_1),\ldots,\rP_r(\vec c_r),
\overline\rS_1(\vec d_1),\ldots,\overline\rS_s(\vec d_s)$,

\noindent
either some $\rP_i(\vec c_i)$ is not in $I(\P,\Mm)$, or some 
$\rS_i(\vec d_i)$ is in~$\Mm$. A refutation is thus defined
by co-induction.
\end{proof}

Let $D$ be a refutation. For any node $\nu$ of $D$ we define 
a set of atoms $\Delta_\nu$ by induction. 
If $\nu$ is the root then $\Delta_\nu=\pusty$. Otherwise 
$\nu$ is an immediate successor (child) of a node $\mu$.
If the labels of~$\mu$ and $\nu$ are respectively 
$\rP(\vec a)$ and $\rQ(\vec b\ciut)$ then we take 
$\Delta_\nu=\Delta_\mu\cup\{\rP\rQ(\vec a\vec b\ciut)\}$.
One can say that $\Delta_\nu$ collects the history of 
transitions between atoms along the path to~$\nu$. 
Observe that, as the distance from the root increases,
this history must eventually contain a repetition.

\begin{lemma}\label{ud35u552}
Let node $\nu$ in a~refutation $D$ be labeled~$\rR(\vec c\ciut)$.
Then \mbox{$\Gamma_\Mm,\Delta_\nu,\rR?(\vec c\ciut)\vdash\circ$}.
\end{lemma}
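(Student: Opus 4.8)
The plan is to proceed by induction on the structure of the refutation $D$, working from the leaves upward — or, since $D$ may be infinite, more precisely by a well-founded induction on the distance from $\nu$ to the ``frontier'' of the relevant subtree, using the observation (noted just before the statement) that along any path the history $\Delta_\nu$ must eventually repeat. So the real induction is on, say, $m^{|B(\P)|}$ minus the length of the longest repetition-free path in $D$ below $\nu$; this is legitimate because once a repetition occurs in $\Delta$ we can terminate immediately via axiom~(\ref{aksjAA}). Fix the node $\nu$ with label $\rR(\vec c\ciut)$ and assume $\Gamma_\Mm,\Delta_\nu,\rR?(\vec c\ciut)$ in the environment, with goal $\circ$.

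First I would apply the axiom~(\ref{aksj55}) instance for $\rR$, which, given $\rR?(\vec c\ciut)$, reduces the goal $\circ$ to the $l$ parallel obligations $\Gamma_\Mm,\Delta_\nu,\rR?(\vec c\ciut),\rK^0_i(\vec c\ciut)\vdash\overline\rK^0_i$, one for each clause $K_i$ with target $\rR$. Fix one such clause $K=K(\vec x,\vec y)$ with head $\rR(\vec u)$. If $\vec u$ contains a constant or a repeated variable that is not matched by $\vec c$, the obligation is discharged at once by the appropriate instance of axiom~(\ref{aksjaaa}) or~(\ref{aksjbbb}), so assume $\vec c$ is a legal instance $\vec u[S]$ of the head, determining the substitution on $\vec x$. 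Now the $|\vec y|$ axioms of type~(\ref{aksj66}) are applied successively: each one turns an assumption $\rK^j(\vec c,\vec\nu)$ into $m$ branches $\rK^{j+1}(\vec c,\vec\nu,c_k)$, so after $|\vec y|$ rounds we have, for every valuation $\vec b$ of $\vec y$, an obligation $\Gamma_\Mm,\Delta_\nu,\rR?(\vec c\ciut),\ldots,\rK^m(\vec c,\vec b\ciut)\vdash\overline\rK^m$. We must discharge the branch for every $\vec b$; since $\rR(\vec c\ciut)\notin I(\P,\Mm)$ (as it labels a node of a refutation), the ground instance $K(\vec a,\vec b\ciut)$ of $\overline\P$ — whose rhs is $\rP_1(\vec c_1),\ldots,\rP_r(\vec c_r),\overline\rS_1(\vec d_1),\ldots,\overline\rS_s(\vec d_s)$ — has, for every $\vec b$, either some $\overline\rS_j(\vec d_j)\in\overline\Mm$ (equivalently $\rS_j(\vec d_j)\in\Mm$), or some $\rP_i(\vec c_i)\notin I(\P,\Mm)$. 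In the first subcase I discharge the $\overline\rK^m$ goal using axiom~(\ref{aksj88}) with premise $\rS(\vec w)[S]\in\Mm\subseteq\Gamma_\Mm$.

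In the second subcase, $\rP_i(\vec c_i)\notin I(\P,\Mm)$: here I use axiom~(\ref{aksj77}), which reduces $\overline\rK^m$ to $\rP_i?(\vec c_i)\to\rR\rP_i(\vec c,\vec c_i)\to\circ$, i.e.\ to proving $\circ$ under the two new assumptions $\rP_i?(\vec c_i)$ and $\rR\rP_i(\vec c,\vec c_i)$. By the definition of a refutation, the subtree of $D$ at $\nu$ corresponding to clause $K$ is a refutation for some positive rhs atom; choosing $i$ so that this atom is $\rP_i(\vec c_i)$ — which is consistent with the case split, since a refutation subtree exists exactly when the clause is ``blocked'' by such a $\rP_i\notin I(\P,\Mm)$ — there is a child node $\mu$ of $\nu$ labeled $\rP_i(\vec c_i)$, and $\Delta_\mu=\Delta_\nu\cup\{\rR\rP_i(\vec c,\vec c_i)\}$. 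If the pair $(\rR,\vec c),(\rP_i,\vec c_i)$ already occurs in $\Delta_\nu$, the path has a repetition and I close $\circ$ with axiom~(\ref{aksjAA}); otherwise, the longest repetition-free path below $\mu$ is shorter than that below $\nu$, so the induction hypothesis applies to $\mu$ and gives $\Gamma_\Mm,\Delta_\mu,\rP_i?(\vec c_i)\vdash\circ$, which is exactly what is needed after the application of~(\ref{aksj77}) (the new environment contains $\Delta_\mu$ and $\rP_i?(\vec c_i)$, plus harmless extra assumptions). Axiom~(\ref{aksj99}) is not needed here in the forward direction; it only matters for the converse (Lemma-style) argument about transitivity of the memory.

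The main obstacle I anticipate is bookkeeping rather than conceptual: making the termination measure precise so that the induction is genuinely well-founded despite $D$ being infinite, and verifying that the accumulated auxiliary assumptions ($\rR?$-atoms, intermediate $\rK^j$-atoms, $\rR\rP$-atoms) introduced along the way never interfere — in particular that the only assumptions with target $\overline\rK^m$, $\circ$, etc., available in $\Gamma_\Mm,\Delta_\nu,\ldots$ are the intended axioms, so that the proof-search steps above are forced. Both points are routine given Lemma~\ref{genera22}, but need to be stated carefully; I would phrase the induction on the quantity ``$m^{|B(\P)|}$ minus the length of the longest $\Delta$-repetition-free downward path from $\nu$'' and invoke the generation lemma at each proof-search step to justify that the displayed axiom applications are the only possibilities.
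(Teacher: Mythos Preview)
Your overall strategy is essentially the paper's: induct on a well-founded measure tied to repetitions along the path, apply axiom~(\ref{aksj55}), dispatch non-matching heads with~(\ref{aksjaaa})/(\ref{aksjbbb}), unfold the extra variables with~(\ref{aksj66}), and for each ground instance use the child supplied by the refutation~$D$, closing with~(\ref{aksj88}) in the negative case and with~(\ref{aksj77}) plus the induction hypothesis in the positive case. That part is fine.

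The genuine gap is your handling of the loop case, and specifically your claim that axiom~(\ref{aksj99}) is not needed in this direction. When you detect a repetition and try to ``close $\circ$ with axiom~(\ref{aksjAA})'', you need an assumption of the shape $\rP\rP(\vec e,\vec e\ciut)$ in the environment. But $\Delta_\nu$ (together with the freshly added $\rR\rP_i(\vec c,\vec c_i)$) only contains the individual edge atoms $\rP\rQ(\vec a,\vec b\ciut)$ recorded along the path; none of them is a self-pair unless the cycle happens to have length one. In the general situation the repetition gives you a chain $\rP\rQ(\vec c,\vec d\ciut),\ \rQ\rS(\vec d,\vec e\ciut),\ \ldots,\ \rT\rP(\vec a,\vec c\ciut)$ in~$\Delta_\nu$, and you must repeatedly apply the transitivity axiom~(\ref{aksj99}) to manufacture $\rP\rP(\vec c,\vec c\ciut)$ before~(\ref{aksjAA}) becomes applicable. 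This is exactly what the paper does, and without it your proof does not close. Relatedly, your termination test (``the pair $(\rR,\vec c),(\rP_i,\vec c_i)$ already occurs in $\Delta_\nu$'') is an \emph{edge} repetition, whereas the simpler and sufficient test is a \emph{label} repetition on the path to~$\nu$; either can be made to work for well-foundedness, but both require~(\ref{aksj99}) in the closing step.

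A minor remark: your final paragraph about proof-search steps being ``forced'' is irrelevant here. The lemma only asks you to exhibit \emph{some} derivation of~$\circ$; inversion of proof steps via Lemma~\ref{genera22} plays no role in this direction.
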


\begin{proof}
Let $h(\nu)$ be 0 if all children of $\nu$ are labeled \relax
by overlined atoms of the form $\overline\rQ(\vec d\ciut)$
or some label occurs twice on the path to $\nu$.
Otherwise take $h(\nu)$ to be one plus the
maximum of $h(\pi)$, for all children~$\pi$ of~$\nu$.
The measure $h(\nu)$ is well-defined, because on any infinite path 
labels must repeat. The proof goes by induction on~$h(\nu)$.

First suppose that an atom $\rP(\vec c\ciut)$ occurs twice at the 
path leading to~$\nu$. Then the set $\Delta_\nu$ contains 
a sequence of assumptions of the form  
$\rP\rQ(\vec c,\vec d\ciut)$, $\rQ\rS(\vec d,\vec e\ciut)$, \ldots, 
 $\rT\rP(\vec a,\vec c\ciut)$.
By repeatedly using axioms~(\ref{aksj99}) we reduce our 
proof obligation $\Gamma_\Mm,\Delta_\nu,\rR?(\vec c\ciut)\vdash\circ$
to proving $\circ$ from an environment containing $\rP\rP(\vec c,\vec c\ciut)$
and finally we apply axiom~(\ref{aksjAA}).

In the other case 
we prove $\circ$ using an axiom of type~(\ref{aksj55}). This 
splits our proof into as many parallel ones as as there are clauses
in~$\P$ targeted~$\rR(\vec u)$, for some~$\vec u$. 
For every such clause $K(\vec x,\vec y)$
we need to prove the judgment 
\mbox{$\Gamma_\Mm,\Delta_\nu,\rR?(\vec c\ciut),\rK^0(\vec c\ciut)\vdash 
\overline\rK^0$}. Here we have two cases, depending on whether 
the vector~$\vec u$ can be instantiated to~$\vec c$ or not. 
The latter may happen when $\vec u$ contains a constant or a repetition
while $\vec c$ does not. But then an appropriate instance of 
axiom~(\ref{aksjaaa}) or~(\ref{aksjbbb}) can be applied to do the job.
Otherwise we apply axioms~(\ref{aksj66}) and this 
leads to further branching
of the proof. At the end of each of the branches we have the judgments
$\Gamma_\Mm,\Delta_\nu,\rR?(\vec c\ciut),\rK^0(\vec c\ciut),\ldots,\rK^m(\vec c,
\vec d\ciut)\vdash \overline\rK^m$, where $\vec d$ are all possible choices
of constants. So~now we have as many judgments to prove as there 
are immediate successors of~$\nu$ in~$D$: each one corresponds to 
a~clause $K(\vec c,\vec d\ciut)$ in 
${\it ground}(\P)$ targeted~$\rR(\vec c\ciut)$. 

Suppose that the successor $\pi$ of~$\nu$ 
corresponding to~$K(\vec c,\vec d\ciut)$ is labeled $\overline S(\vec e\ciut)$,
where \mbox{$S(\vec e\ciut)\in\Mm$}. 
Then $S(\vec e\ciut)\in\Gamma_\Mm$ and we can 
derive $\overline\rK^m$ using axiom~(\ref{aksj88}). 

Otherwise the subtree of $D$ rooted at~$\pi$ is a refutation 
of some~$\rP(\vec e\ciut)$. Now $h(\pi)<h(\nu)$ whence by the 
induction hypothesis we have a proof of the judgment
$\Gamma_\Mm,\Delta_\nu,
\rR\rP(\vec d,\vec e\ciut),\rP?(\vec e\ciut)\vdash \circ$. It remains to
call an instance of axiom~(\ref{aksj77}).
\end{proof}

\begin{lemma}\label{hd11143d37}
If $\P$ is incomplete for $\Mm$ then $\Gamma_\Mm\vdash \rB$.
\end{lemma}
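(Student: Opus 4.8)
The plan is to unwind the definition of incompleteness and then assemble the axioms of $\varphi$ into a proof of $\rB$, using the refutation machinery of Lemmas~\ref{uud352} and~\ref{ud35u552} as the core. Recall that $\P$ is incomplete for $\Mm$ means $\Mm\neq I(\P,\Mm)$ in the direction $\Mm\not\subseteq I(\P,\Mm)$; more precisely, since $\P$ always satisfies $I(\P,\Mm)\subseteq\Mm$ for a model $\Mm$ in the relevant sense (soundness being the other failure mode, handled by case $\rA$), incompleteness gives us a ground atom $\rR(\vec c\ciut)\in\Mm$ with $\rR(\vec c\ciut)\notin I(\P,\Mm)$. I would begin by making this atom explicit and noting that $\rR(\vec c\ciut)\in\Mm\subseteq\Gamma_\Mm$, so the assumption $\rR(\vec c\ciut)$ is available in the environment.

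Next I would invoke Lemma~\ref{uud352}: since $\rR(\vec c\ciut)\notin I(\P,\Mm)$, there is a refutation $D$ for $\rR(\vec c\ciut)$, whose root $\nu_0$ is labeled $\rR(\vec c\ciut)$ and satisfies $\Delta_{\nu_0}=\pusty$. Then Lemma~\ref{ud35u552}, applied to the root, yields $\Gamma_\Mm,\rR?(\vec c\ciut)\vdash\circ$ (using $\Delta_{\nu_0}=\pusty$). Now I would finish by applying the relevant instance of axiom~(\ref{aksj00}), namely $\forall\vec z.\,\rR(\vec z\ciut)\to(\rR?(\vec z\ciut)\to\circ)\to\rB$, instantiated at $\vec z:=\vec c$: its first premise $\rR(\vec c\ciut)$ is discharged by the assumption noted above, and its second premise $\rR?(\vec c\ciut)\to\circ$ is discharged by $\lambda$-abstracting a fresh proof variable over the derivation supplied by Lemma~\ref{ud35u552}. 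This produces $\Gamma_\Mm\vdash\rB$, as required. In lambda-term notation the witness is roughly $A_{\rR}\,\vec c\;X\;(\lambda Y\,{:}\,\rR?(\vec c\ciut).\,M)$, where $A_{\rR}$ is the axiom~(\ref{aksj00}) instance, $X:\rR(\vec c\ciut)$ is the assumption from $\Gamma_\Mm$, and $M$ is the proof term from Lemma~\ref{ud35u552}.

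The step I expect to carry the real weight has already been isolated into Lemma~\ref{ud35u552}, so at the level of this lemma there should be no serious obstacle — the argument is a short composition of three earlier facts. The one point requiring a line of care is the justification that incompleteness delivers an atom in $\Mm\setminus I(\P,\Mm)$ rather than merely $\Mm\neq I(\P,\Mm)$ in some unspecified direction; this should be spelled out by recalling that the fixed-point operator $F$ is increasing and that $\rR(\vec c\ciut)\in\Mm$ while $\rR(\vec c\ciut)\notin I(\P,\Mm)$ is exactly what "incomplete" abbreviates in the earlier discussion (the complementary failure, $\overline\P\cup\overline\Mm$ proving too much, is the unsound case $\rA$ already treated in Lemma~\ref{jwq7654}). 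Everything else is bookkeeping with the axiom instances.
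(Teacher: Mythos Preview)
Your proposal is correct and follows exactly the paper's route: pick $\rR(\vec c\ciut)\in\Mm\setminus I(\P,\Mm)$, invoke Lemma~\ref{uud352} to get a refutation, apply Lemma~\ref{ud35u552} at the root (where $\Delta_{\nu_0}=\pusty$) to obtain $\Gamma_\Mm,\rR?(\vec c\ciut)\vdash\circ$, and finish with axiom~(\ref{aksj00}). One small quibble: your remark that ``$\P$ always satisfies $I(\P,\Mm)\subseteq\Mm$'' is not true in general---that containment can fail, and its failure is precisely the unsound case~$\rA$---but you recover the right conclusion anyway, since ``incomplete'' here simply \emph{means} there is an atom in $\Mm\setminus I(\P,\Mm)$.
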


\begin{proof} There is an atom $\rR(\vec c\ciut)\in\Mm-I(\P,\Mm)$.
By Lemma~\ref{uud352} we have a refutation \relax
for $\rR(\vec c\ciut)$. Apply Lemma~\ref{ud35u552} to the root
to obtain $\Gamma_\Mm,\rR?(\vec c\ciut)\vdash\circ$, and use 
axiom~(\ref{aksj00}) to derive~$\rB$.
\end{proof}

We now prove that
whenever $\rB$ is provable, the program must be 
incomplete.\footnote{Surprisingly, 
the natural conjecture that every proof of $\rB$ 
yields a refutation (a~correct refuter's winning strategy) 
is actually false. In a sense, the refuter can win by cheating, but
she cannot cheat against a winning strategy of the prover.} 

\paragraph{Derivations from $\overline\P\cup\overline\Mm:$}\quad 
Proofs of $\overline\P\cup\overline\Mm\vdash {\bf a}$, for 
${\bf a}\in D(\P)$, (cf.~Lemma~\ref{dsewawqd}) are 
of particularly simple shape (they do not contain any lambda-abstractions).
Such proofs can be represented as trees labeled with clauses of~$\overline\P$
and elements of $\overline\Mm$. More precisely, we say that a~finite tree $T$
is a {\it derivation\/} for $\rR(\vec e\ciut)$ when:
\begin{itemize}
\item The root of $T$ is labeled $K(\vec a,\vec b\ciut)$, 
where $K(\vec a,\vec b\ciut)$ 
is a clause in $\overline\P$ of the form

\hfill $\rR(\vec e\ciut)\klaus \rP_1(\vec c_1),\ldots,\rP_r(\vec c_r),
\overline\rS_1(\vec d_1),\ldots,\overline\rS_s(\vec d_s)$.\hfill(*)

\item The immediate subtrees of~$T$ are 
derivations of 
$\rP_1(\vec c_1),\ldots,\rP_r(\vec c_r)$ and 
single nodes labeled 
$\overline\rS_1(\vec d_1),\ldots,\overline\rS_s(\vec d_s)\in\Mm$.
\end{itemize}
\noindent
Let~$K(\vec a,\vec b\ciut)$ be of the form~(*) and assume that 
$\vec b=b_1\ldots b_d$. 
We say that the following atoms (where $i=0,\ldots,d$ and $j= 1,\ldots,r$)
are {\it justified\/} by clause~$K(\vec a,\vec b\ciut)$:
 
\hfil  $\rR?(\vec e\ciut)$, 
\hfil $\rK^i(\vec e,b_1\ldots b_i)$, 
\hfil  $\rR\rP_j(\vec e,\vec c_j)$.

\noindent
If a node $n$ of $T$ is labeled by a clause which justifies an atom~${\bf a}$
then we also say that the atom~${\bf a}$ is {\it justified\/} by the node~$n$.
We say that $T$ contains a {\it return\/} if there are two nodes in~$T$ 
such that one is a descendant of the other and both justify the same 
atom of the form $\rQ?(\vec e\ciut)$. Clearly, a~return can be ``pumped out''
so the following lemma is quite immediate.

\begin{lemma}\label{ppwe3232w}
An atom $\rQ(\vec e\ciut)$ belongs to $I(P,\Mm)$ \iff 
there exists a derivation for $\rQ(\vec e\ciut)$ without returns.
\end{lemma}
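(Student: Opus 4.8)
The plan is to reduce the statement almost entirely to Lemma~\ref{dsewawqd} together with a routine ``pumping'' argument. The key observation is that a derivation for $\rQ(\vec e\ciut)$, as just defined, is nothing but a proof tree witnessing $\overline\P\cup\overline\Mm\vdash\rQ(\vec e\ciut)$: internal nodes apply instances of clauses of $\overline\P$ by modus ponens, and the leaves labelled by overlined atoms are axioms from $\overline\Mm$ (this matches the remark preceding the definition, that such proofs carry no lambda-abstractions and are representable as trees over clauses and elements of $\overline\Mm$). So I would first record this identification of ``derivations'' with ``$\overline\P\cup\overline\Mm$-proofs''.

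For the $(\Leftarrow)$ direction, given any derivation for $\rQ(\vec e\ciut)$ — return-free or not — the identification above plus Lemma~\ref{dsewawqd} immediately gives $\rQ(\vec e\ciut)\in I(\P,\Mm)$; the extra hypothesis of being return-free is simply not needed here. For the $(\Rightarrow)$ direction I would argue: if $\rQ(\vec e\ciut)\in I(\P,\Mm)$, then Lemma~\ref{dsewawqd} yields $\overline\P\cup\overline\Mm\vdash\rQ(\vec e\ciut)$, hence some derivation $T$ for $\rQ(\vec e\ciut)$ exists; now take $T$ to have the least possible number of nodes among all derivations for $\rQ(\vec e\ciut)$, and show this minimal $T$ has no return. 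If it did, there would be nodes $n$ and $n'$ with $n'$ a proper descendant of $n$, both justifying the same atom $\rQ'?(\vec e'\ciut)$; by the definition of ``justified'' this forces the clauses labelling $n$ and $n'$ to have the same target $\rQ'(\vec e'\ciut)$, so the subtrees of $T$ rooted at $n$ and at $n'$ are both derivations for $\rQ'(\vec e'\ciut)$. Splicing the (strictly smaller) subtree at $n'$ in place of the subtree at $n$ produces another derivation for $\rQ(\vec e\ciut)$ with fewer nodes, contradicting minimality. Since cutting a subtree down to one of its own subtrees never creates an ancestor–descendant pair that was not already present, no new returns appear, which is what makes the minimality contradiction go through.

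The only point requiring any care — and the closest thing to an obstacle — is the precise bookkeeping that ``$n$ and $n'$ justify the same atom $\rQ'?(\vec e'\ciut)$'' is equivalent to ``the clauses at $n$ and $n'$ both have target $\rQ'(\vec e'\ciut)$'', which is exactly what licenses the cut-and-paste; this is read off directly from the clause ``$\rR?(\vec e\ciut)$ is justified by $K(\vec a,\vec b\ciut)$'' in the definition of justified atoms, so it is really just an unwinding of definitions. Everything else (the decrease in node count, absence of newly created returns, termination of the replacement) is elementary, so no genuinely hard step is expected.
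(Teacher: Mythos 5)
Your argument is correct and is essentially the paper's own: the paper merely observes that derivations are tree representations of proofs from $\overline\P\cup\overline\Mm$ (so Lemma~\ref{dsewawqd} links them to $I(\P,\Mm)$) and that a return can be ``pumped out'', and your minimal-size derivation with the splicing step is exactly that pumping argument made explicit. One small remark: your closing sentence about no new returns being created is superfluous, since the minimality contradiction only needs the spliced tree to be a strictly smaller derivation for the same atom.
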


\paragraph{Justified environments:}\quad Let $T$ be a derivation.
An atom of the form $\rQ?(\vec e\ciut)$ or of the form 
$\rK^i(\vec e,\vec b\ciut)$ is {\it justified 
by~$T$\/} when it is justified by some clause 
occurring as a label in~$T$. An~atom of the form 
$\rP\rQ(\vec e,\vec c\ciut)$  is {\it justified 
by~$T$\/} when it is either justified by some clause 
occurring as a label in~$T$ or there are atoms
$\rP\rS(\vec e,\vec d\ciut)$ and $\rS\rQ(\vec d,\vec c\ciut)$ justified by~$T$.
(Informally, $T$ justifies the transitive closure of the relation 
induced by atoms justified by labels of~$T$.)

Now let $\Delta$ consist of atomic formulas of forms 
  $\rP\rQ(\vec e,\vec c\ciut)$, $\rP?(\vec e\ciut)$ and %
  $\rK^i(\vec e,\vec b\ciut)$. We say that $\Delta$ is {\it justified 
by~$T$\/} when all members of~$\Delta$ are justified by~$T$. 

\begin{lemma}\label{pwep232w3}
Let $T$ be a~derivation without returns and let $\Delta$
be justified by~$T$. Neither $\circ$ nor any of the logical constants 
$\overline\rK^i$ is provable from~$\Gamma_\Mm,\Delta$.
\end{lemma}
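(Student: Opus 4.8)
The plan is to prove the contrapositive-flavoured statement directly: by Lemma~\ref{lemma:lnf} it suffices to show that there is no long normal form of type $\circ$, and none of type $\overline\rK^i$, in the environment $\Gamma_\Mm,\Delta$. I would run one simultaneous induction on the size of a hypothetical such lnf, covering $\circ$ and all the $\overline\rK^i$ at once. Since no assumption in $\Delta$, $\Mm$ or $\overline\Mm$ has target $\circ$ or $\overline\rK^i$ (the $\rP\rQ$-, $\rP?$- and $\rK^i$-atoms are never targets of the axioms in Figure~\ref{figaxioms}, and the $\rP(\vec c)$-, $\overline\rP(\vec c)$-atoms have the wrong shape), the head of such an lnf must be an instance of an axiom whose target is the current goal, and the whole argument is a case analysis over that axiom.

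For most axioms the dichotomy is: either the instance cannot be completed at all, because one of its atomic premises is not derivable from $\Gamma_\Mm,\Delta$, or applying it produces a \emph{strictly smaller} obligation $\Gamma_\Mm,\Delta'\vdash\circ$ or $\Gamma_\Mm,\Delta'\vdash\overline\rK^j$ with $\Delta\subseteq\Delta'$ still justified by $T$, contradicting the induction hypothesis. The first situation occurs for (\ref{aksjaaa}) and (\ref{aksjbbb}), whose $\rK^0$-premise would have to display a constant, respectively a repetition, incompatible with the head of the clause $K$, so it cannot be justified by $T$; and for (\ref{aksj88}), whose premise $S(\vec w)$ would have to lie in $\Mm$, whereas the node of $T$ that justifies the accompanying $\rK^m$-atom is labelled by a ground instance of $K$ and, being part of a derivation, has below it a leaf $\overline\rS(\vec w)$, forcing $\rS(\vec w)\notin\Mm$. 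In the second situation one must choose the right universal branch: for (\ref{aksj55}) follow the clause actually used at the node of $T$ that justifies the $\rR?$-atom, so the new $\rK^0$-atom is justified by that very node; for (\ref{aksj66}) follow the constant matching the $\vec y$-instance recorded at that node; for (\ref{aksj77}) the node of $T$ justifying the $\rK^m$-atom is labelled by $K$, which by the definition of a derivation has a child that is a subderivation of the positive body atom $\rP(\vec v)$, so both new atoms $\rP?$ and $\rR\rP$ are again justified by $T$; and for (\ref{aksj99}) the new atom $\rR\rQ(\vec z,\vec w)$ is justified by $T$ precisely because $\rR\rP(\vec z,\vec y)$ and $\rP\rQ(\vec y,\vec w)$ already are, by the transitivity clause in the definition of "justified by~$T$".

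The case that carries all the weight is the loop axiom (\ref{aksjAA}), $\forall\vec z(\rP\rP(\vec z,\vec z)\to\circ)$: to use it we would need $\rP\rP(\vec e,\vec e)$ derivable, hence $\rP\rP(\vec e,\vec e)\in\Delta$, hence $\rP\rP(\vec e,\vec e)$ justified by $T$. So the lemma reduces to the combinatorial fact that no atom of the form $\rP\rP(\vec e,\vec e)$ can be justified by a return-free derivation. I would prove this by unfolding "justified by $T$" into its transitive-closure definition: each one-step link it rests on comes from a clause labelling some node $n$ of $T$ whose corresponding child is a subderivation whose root has exactly the target of the link's second component, so every link descends along $T$; a chain of links that closes into a cycle therefore produces two comparable nodes of $T$ carrying the same target, that is, a return — and $T$ has none.

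I expect this last step to be the real obstacle. The naive "a link strictly decreases the height of the relevant subderivation" fails, since two nodes of $T$ with the same target may head different clauses and carry subderivations of different heights, so acyclicity has to be obtained by tracking how the transitive unfolding of "justified by $T$" is realised along branches of $T$ and invoking return-freeness exactly there; the rest — the simultaneous induction and checking that each admissible axiom either stalls or shrinks the proof — is bookkeeping. An equivalent packaging avoids the proof-term induction: build a classical Herbrand model $V$ of $\Gamma_\Mm\cup\Delta$ in which, among $\rP\rQ$-, $\rP?$- and $\rK^i$-atoms, exactly those justified by $T$ are true, the true positive atoms are those of $\Mm$, the constants $\lupa,\rA,\rB,\bullet,\rOm$ are true, and $\circ$ together with every $\overline\rK^i$ are false, and verify that every axiom of Figure~\ref{figaxioms} holds in $V$; every verification is routine except that of (\ref{aksjAA}) (and, trivially, (\ref{aksj99})), which is again the acyclicity of the transition relation of a return-free $T$. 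Since intuitionistic provability implies classical provability, such a $V$ finishes the proof.
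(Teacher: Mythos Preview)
Your plan coincides with the paper's: induction on the size of a hypothetical long normal proof, case-split on the head axiom; each case either stalls on an underivable atomic premise or hands off to a strictly smaller obligation over an enlarged $\Delta'$ still justified by~$T$. Your per-axiom analysis is correct and more detailed than the paper's (which writes out only~(\ref{aksj55}), (\ref{aksj66}), (\ref{aksj99}), (\ref{aksjAA}) and declares ``other cases are omitted''); in particular your treatments of~(\ref{aksjaaa}), (\ref{aksjbbb}), (\ref{aksj77}), (\ref{aksj88}) are fine.

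The genuine gap is exactly where you place it, axiom~(\ref{aksjAA}), but it is not a matter of bookkeeping: the combinatorial claim you reduce to --- that no atom $\rP\rP(\vec e,\vec e)$ can be justified by a return-free derivation --- is \emph{false}, because ``justified by~$T$'' takes the transitive closure over one-step links contributed by \emph{all} nodes of~$T$, not just those on one branch. Concretely, let the root of~$T$ derive $\rA(a)$ from $\rB(b)$ and $\rC(c)$; on the left derive $\rB(b)$ via $\rB(b)\klaus\rC(c)$ above a fact $\rC(c)\klaus\,$, and on the right derive $\rC(c)$ via $\rC(c)\klaus\rB(b)$ above a fact $\rB(b)\klaus\,$. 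This~$T$ is return-free (the two $\rB(b)$-targets, and the two $\rC(c)$-targets, lie on incomparable branches), yet the left branch justifies $\rB\rC(b,c)$, the right justifies $\rC\rB(c,b)$, and the transitivity clause then makes $\rB\rB(b,b)$ justified by~$T$. With $\Delta=\{\rB\rB(b,b)\}$ one obtains $\Gamma_\Mm,\Delta\vdash\circ$ in one step via~(\ref{aksjAA}), so the lemma as stated already fails for this~$\Delta$. Your sentence ``every link descends along~$T$, so a closed chain gives two comparable nodes'' tacitly assumes the links are realised along a single branch; here they are not. Your Herbrand-model alternative hits the same wall: in that~$V$, $\rB\rB(b,b)$ is true and~$\circ$ is false, so axiom~(\ref{aksjAA}) is violated. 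The paper's own proof asserts the identical claim without argument, so you have inherited the gap rather than introduced it. A repair needs a sharper induction invariant that pins the justifying nodes to a single root-to-node path of~$T$ --- along one path, taking a deepest justifying node and looking at its child does force a return --- but then case~(\ref{aksj77}) has to be revisited, since the new $\rP?$-atom introduced there is justified by a child that need not lie on the current path.
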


\begin{proof} Suppose otherwise, i.e., suppose that either $\circ$ 
or some $\overline\rK^i$ has a (long normal) proof~$M$. By induction
on the size of~$M$ we show that it is impossible. We proceed 
by cases depending on the head variable of~$M$. As a first example, consider
the case when the proof uses an assumption 

\hfil $\forall\vec z.\,\rR?(\vec z\ciut)\to  
(\rK^0_1(\vec z\ciut)\to\overline\rK^0_1)
  \to\cdots\to   (\rK^0_l(\vec z\ciut)\to\overline\rK^0_l)   \to\circ$ 

\noindent \relax
of type~(\ref{aksj55}) to derive~$\circ$. Then, for some $\vec e$, we have 
$\Gamma_\Mm,\Delta\vdash \rR?(\vec e\ciut)$ and 
$\Gamma_\Mm,\Delta,K^0_i(\vec e\ciut)\vdash \overline\rK^0_i$, for all~$i$.
There is no axiom where $\rR?$ occurs in the target, so it must be 
the case that $\rR?(\vec e\ciut)\in\Delta$. Thus $\rR?(\vec e\ciut)$ 
is justified
by some clause $K_i(\vec a,\vec b\ciut)$ occurring in~$T$. Therefore also
$\rK^0_i(\vec e\ciut)$ is justified by~$T$ (for this particular~$i$), whence 
the environment $\Delta,K^0_i(\vec e\ciut)$ is justified by~$T$. 
This way we obtain
a~contradiction from the induction hypothesis, because the proof of 
$\Gamma_\Mm,\Delta,K^0_i(\vec e\ciut)\vdash \overline\rK^0_1$
must be shorter than our proof of $\Gamma_\Mm,\Delta\vdash \circ$.

The constant $\circ$ could be proven using the 
transitivity scheme~(\ref{aksj99}):

\hfil 
$\forall\vec z\vec y\vec w (\rR\rP(\vec z,\vec y)\to \rP\rQ(\vec y,\vec w)
\to   (\rR\rQ(\vec z,\vec w)\to\circ)   \to\circ)$

\noindent
There are $\vec e$, $\vec b$, $\vec c$ such that 
$\Gamma_\Mm,\Delta\vdash\rR\rP(\vec e,\vec b\ciut)$ and 
$\Gamma_\Mm,\Delta\vdash\rP\rQ(\vec b,\vec c\ciut)$. This can only happen
when $\rR\rP(\vec e,\vec b\ciut)$ and $\rP\rQ(\vec b,\vec c\ciut)$ are 
in $\Delta$ and therefore are justified by~$T$. By definition, also
$\rR\rQ(\vec e,\vec c\ciut)$ is justified, and we apply induction to the
proof of $\Gamma_\Mm,\Delta,\rR\rQ(\vec e,\vec c\ciut)\vdash\circ$.

If $\circ$ were proven using the axiom
$\forall\vec x (\rP\rP(\vec x,\vec x)\to \circ)$ of type~(\ref{aksjAA})
then $\rP\rP(\vec e,\vec e\ciut)\in\Delta$, for some~$\vec e$.
This is a contradiction because $\rP\rP(\vec e,\vec e\ciut)$ cannot 
be justified in a derivation without returns.

Now suppose that
a proof of $\Gamma_\Mm,\Delta\vdash \overline\rK^i$ uses 
an axiom~(\ref{aksj66}):

\noindent\hfil
$\forall\vec z \vec\nu.\, \rK^i(\vec z,\vec\nu)\kto 
(\rK^{i+1}(\vec z,\vec\nu,c_1)\kto \overline \rK^{i+1})\kto
\cdots\kto(\rK^{i+1}(\vec z,\vec\nu,c_m)\kto \overline \rK^{i+1})\kto 
\overline \rK^i$,

\noindent
where $\vec\nu$ is a vector of variables of length~$i$, and 
$c_1,\ldots,c_m$ are all the constants of~$\P$. There are 
vectors $\vec e$, $\vec d$ of constants such that 
$\Gamma_\Mm,\Delta\vdash\rK^i(\vec e,\vec d\ciut)$ and 
$\Gamma_\Mm,\Delta,\rK^{i+1}(\vec e,\vec d,c_j)\vdash \overline\rK^{i+1}$,
for all~$j$. We must have $\rK^{i}(\vec e,\vec d\ciut)\in \Delta$, as there 
is no other way to prove it, whence $\rK^{i}(\vec e,\vec d\ciut)$ is 
justified by~$T$. Therefore a clause of the form $K(\vec a,\vec b\ciut)$
occurs in~$T$, for some $\vec a$ and some 
vector~$\vec b$ extending~$\vec dc_j$.
Then $\rK^{i+1}(\vec e,\vec d,c_j)$ is justified by~$T$ and we can 
apply the induction hypothesis to 
$\Gamma_\Mm,\Delta,\rK^{i+1}(\vec e,\vec d,c_j)\vdash \overline\rK^{i+1}$.
Other cases are omitted. \relax
\end{proof}

\begin{lemma}\label{pw3rr3e442}
If $\Gamma_\Mm\vdash \rB$ then $\P$ is unsound for $\Mm\ciut{:}$ there exists
an atom $\rR(\vec e\ciut)$ such that \mbox{$\rR(\vec e\ciut)\in\Mm$} but 
\mbox{$\rR(\vec e\ciut)\not\in I(P,\Mm)$}.
\end{lemma}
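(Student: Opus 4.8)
Lemma~\ref{pw3rr3e442} is stated as ``$\P$ is unsound for $\Mm$'' but, matching Lemma~\ref{jwq7654}, this is really the \emph{incomplete} case; I read the claim as: if $\Gamma_\Mm\vdash\rB$ then there is $\rR(\vec e\ciut)\in\Mm-I(\P,\Mm)$.

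The plan is to prove the contrapositive: assuming $\P$ is \emph{complete} for $\Mm$, i.e.\ $\Mm\subseteq I(\P,\Mm)$, show that $\rB$ is not provable from $\Gamma_\Mm$. By Lemma~\ref{genera22}, the only assumptions with target $\rB$ are the axioms~(\ref{aksj00}), so a proof of $\rB$ must proceed through some instance $\rR(\vec z\ciut)\to(\rR?(\vec z\ciut)\to\circ)\to\rB$; this requires $\Gamma_\Mm\vdash\rR(\vec e\ciut)$ for some $\vec e$ (forcing $\rR(\vec e\ciut)\in\Mm$, since the only atoms of that shape in $\Gamma_\Mm$ come from $\Mm$), together with $\Gamma_\Mm,\rR?(\vec e\ciut)\vdash\circ$. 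So it suffices to show: for every $\rR(\vec e\ciut)\in\Mm$, if $\rR(\vec e\ciut)\in I(\P,\Mm)$ then $\Gamma_\Mm,\rR?(\vec e\ciut)\nvdash\circ$. This is the statement I would isolate as the key lemma, and it is where Lemma~\ref{pwep232w3} does the work.

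First I would invoke Lemma~\ref{ppwe3232w}: since $\rR(\vec e\ciut)\in I(\P,\Mm)$, there is a derivation $T$ for $\rR(\vec e\ciut)$ \emph{without returns}. The root of $T$ justifies the atom $\rR?(\vec e\ciut)$, so the one–element environment $\{\rR?(\vec e\ciut)\}$ is justified by~$T$. Then Lemma~\ref{pwep232w3}, applied with $\Delta=\{\rR?(\vec e\ciut)\}$, says precisely that $\circ$ is not provable from $\Gamma_\Mm,\rR?(\vec e\ciut)$. Combining this with the generation analysis of axioms~(\ref{aksj00}) above, no instance of~(\ref{aksj00}) can be completed for any $\rR(\vec e\ciut)$, since either $\rR(\vec e\ciut)\notin\Mm$ (and then $\Gamma_\Mm\nvdash\rR(\vec e\ciut)$) or $\rR(\vec e\ciut)\in\Mm\subseteq I(\P,\Mm)$ (and then $\Gamma_\Mm,\rR?(\vec e\ciut)\nvdash\circ$). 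Hence $\Gamma_\Mm\nvdash\rB$, which is the contrapositive of the desired statement.

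The main obstacle is not in this final assembly — which is short once the pieces are in place — but in having Lemma~\ref{pwep232w3} in the exact form needed: it must be robust enough that the environment $\Delta$ may be \emph{any} $T$‑justified set of the relevant atomic shapes, not just the singleton $\{\rR?(\vec e\ciut)\}$, because the induction inside that lemma repeatedly enlarges $\Delta$ (adding atoms $\rK^0_i(\vec e\ciut)$, $\rK^{i+1}(\vec e,\vec d,c_j)$, $\rR\rQ(\vec e,\vec c\ciut)$, etc.) while maintaining $T$‑justifiedness and decreasing proof size. Since Lemma~\ref{pwep232w3} is already proved in the excerpt with that generality, the present proof is essentially a corollary: unwind axiom~(\ref{aksj00}) via Lemma~\ref{genera22}, produce a return‑free derivation via Lemma~\ref{ppwe3232w}, observe that its root justifies $\rR?(\vec e\ciut)$, and cite Lemma~\ref{pwep232w3} to get the contradiction. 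I would also remark that the footnote's warning — a proof of $\rB$ need not encode an honest refutation — is exactly why we argue by non‑provability against a hypothetical proof rather than by extracting a refutation tree from it.
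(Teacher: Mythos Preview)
Your proof is correct and follows essentially the same route as the paper: unwind a putative proof of $\rB$ through axiom~(\ref{aksj00}) to obtain $\rR(\vec e\ciut)\in\Mm$ and $\Gamma_\Mm,\rR?(\vec e\ciut)\vdash\circ$, then use a return-free derivation for $\rR(\vec e\ciut)$ (Lemma~\ref{ppwe3232w}) to justify $\{\rR?(\vec e\ciut)\}$ and invoke Lemma~\ref{pwep232w3} for a contradiction. Your write-up is more explicit than the paper's terse two-line version, but the argument is the same.
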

\begin{proof} If $\rR(\vec e\ciut)\in I(P,\Mm)$ then any derivation
for $\rR(\vec e\ciut)$ without returns justifies 
$\rR?(\vec x)$. Therefore a proof of $\Gamma_\Mm,\rR?(\vec e\ciut)\vdash\circ$
is impossible by Lemma~\ref{pwep232w3}.\relax
\end{proof}

\begin{proposition}\label{111yyy}
The formula $\varphi$ is provable \iff $\P\models_{\sms}\rOm$.
\end{proposition}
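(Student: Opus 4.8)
The plan is to assemble Proposition~\ref{111yyy} from the lemmas already established, organizing the argument around the structure of $\varphi = \psi_1 \to \cdots \to \psi_d \to \lupa$ together with Lemma~\ref{genera1}. First I would invoke Lemma~\ref{genera1} to reduce the provability of $\varphi$ to the derivability of the judgment $\Gamma \vdash \lupa$, where $\Gamma$ consists precisely of all the axioms $\psi_1,\ldots,\psi_d$ of Figure~\ref{figaxioms}. Then, also by Lemma~\ref{pljkuiuh}, I note that $\P \models_{\sms} \rOm$ is equivalent to the assertion that every model $\Mm$ of the language of $\P$ fails to be stable, i.e.\ for every $\Mm$, either $\P$ is unsound for $\Mm$ (some $\rR(\vec c) \in I(\P,\Mm)$ with $\rR(\vec c) \notin \Mm$) or $\P$ is incomplete for $\Mm$ (some $\rR(\vec c) \in \Mm - I(\P,\Mm)$). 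The subtle point is that $\rOm$ is nullary and not in $\P$, so a stable model either does not mention $\rOm$, in which case the SMS-entailment question reduces cleanly via Lemma~\ref{pljkuiuh}; I would state this reduction up front so the remaining work is purely about the ``unsound versus incomplete'' dichotomy.

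For the direction ($\Leftarrow$), assume $\P \models_{\sms} \rOm$, equivalently $\P$ has no stable model. I would build a long normal proof of $\Gamma \vdash \lupa$ by first applying the axioms~(\ref{aksj11}) repeatedly: for each predicate $\rR$ and each tuple $\vec c$ of constants, use axiom~(\ref{aksj11}) once, which universally splits the proof and in each branch adds either $\rR(\vec c)$ or $\overline\rR(\vec c)$ to the environment. Carrying this out systematically over all $\rR,\vec c$ produces, on each leaf branch, an environment $\Gamma_\Mm = \Mm \cup \overline\Mm \cup \{\text{axioms}\}$ for a fully determined model $\Mm$ (as discussed in the prose before Lemma~\ref{kkff4433}, the nondeterminism only helps, since inconsistent branches are dominated by consistent ones). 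On each such branch, since $\Mm$ is not stable, either $\P$ is unsound for $\Mm$, and then Lemma~\ref{jwq7654} gives $\Gamma_\Mm \vdash \rA$, so axiom~(\ref{aksj22}) yields $\lupa$; or $\P$ is incomplete for $\Mm$, and then Lemma~\ref{hd11143d37} gives $\Gamma_\Mm \vdash \rB$, so again axiom~(\ref{aksj22}) yields $\lupa$. (If on some branch $\rOm$ happens to have been chosen into $\Mm$, the axiom $\rOm \to \lupa$ also closes it, but this case is subsumed.) Assembling these branch proofs under the initial $(\forall E)$/$(\to E)$ applications of~(\ref{aksj11}) gives a proof of $\Gamma \vdash \lupa$, hence of $\varphi$.

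For the direction ($\Rightarrow$), assume $\varphi$ is provable; I would argue contrapositively. Suppose $\P$ has a stable model $\Mm_0$. Then $\Mm_0 = I(\P,\Mm_0)$, so $\P$ is both sound and complete for $\Mm_0$: there is no $\rR(\vec c) \in I(\P,\Mm_0)$ with $\overline\rR(\vec c) \in \overline{\Mm_0}$, and no $\rR(\vec c) \in \Mm_0 - I(\P,\Mm_0)$. By Lemma~\ref{jwq7654}, $\Gamma_{\Mm_0} \not\vdash \rA$; by Lemma~\ref{pw3rr3e442} (the contrapositive of which says soundness blocks $\rB$), $\Gamma_{\Mm_0} \not\vdash \rB$; and since $\rOm$ does not occur in $\P$, $\rOm \notin \Mm_0$, so $\rOm$ is not available as an assumption in $\Gamma_{\Mm_0}$. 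Now I would inspect a hypothetical long normal proof of $\varphi$ restricted to the branch that, during the~(\ref{aksj11}) phase, selects exactly the atoms determining $\Mm_0$ (for each $\rR,\vec c$ choosing $\rR(\vec c)$ if $\rR(\vec c)\in\Mm_0$ and $\overline\rR(\vec c)$ otherwise): on that branch the goal $\lupa$ must be discharged using one of the axioms~(\ref{aksj22}), which requires proving one of $\rOm$, $\rA$, $\rB$ from $\Gamma_{\Mm_0}$ — all three impossible. The main obstacle, and the place needing the most care, is making this last step rigorous: one must argue that on that distinguished branch the only assumptions targeted $\lupa$ are the three axioms~(\ref{aksj22}) (and the~(\ref{aksj11}) axioms, which merely re-loop and cannot terminate a finite proof), so that a finite long normal proof is genuinely forced to produce $\rOm$, $\rA$, or $\rB$; this is the generation-lemma bookkeeping (Lemma~\ref{genera22}) applied to the $\lupa$-target, and I would spell out that any use of~(\ref{aksj11}) on this branch only adds atoms already present or their overlined partners, never changing $\Gamma_{\Mm_0}$ in a way that could help, so well-foundedness of the proof forces termination through~(\ref{aksj22}).
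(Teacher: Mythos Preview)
Your assembly is the intended argument; the paper states Proposition~\ref{111yyy} without proof, and your outline is precisely how Lemmas~\ref{jwq7654}, \ref{hd11143d37}, and~\ref{pw3rr3e442} are meant to be combined with the branching via axiom~(\ref{aksj11}) and the generation principle of Lemma~\ref{genera22}.

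There is one slip. You invoke Lemma~\ref{pljkuiuh} to reduce $\P\models_{\sms}\rOm$ to ``$\P$ has no stable model'', but that lemma assumes $\rOm$ does not occur in~$\P$, whereas Proposition~\ref{111yyy} is stated for an arbitrary ground atom~$\rOm$, and the paper's own discussion of axiom~(\ref{aksj22}) explicitly lists ``$\rOm$ holds'' as one of three possibilities alongside unsound and incomplete. In the $(\Leftarrow)$ direction this means there is a genuine third case on each branch: $\Mm$ may be stable with $\rOm\in\Mm$, and then neither Lemma~\ref{jwq7654} nor~\ref{hd11143d37} applies. You mention this case parenthetically but call it ``subsumed''; it is not---it is the case handled directly by the axiom $\rOm\to\lupa$, since $\rOm\in\Mm\subseteq\Gamma_\Mm$. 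Dually, in the $(\Rightarrow)$ direction the contrapositive of $\P\models_{\sms}\rOm$ hands you a stable $\Mm_0$ \emph{with $\rOm\notin\Mm_0$}, not merely some stable $\Mm_0$; that is what makes $\rOm$ unprovable from~$\Gamma_{\Mm_0}$, rather than the extraneous hypothesis that $\rOm$ is absent from~$\P$. With this correction your argument is complete; in particular your treatment of axiom~(\ref{aksj11}) in the $(\Rightarrow)$ direction---one of the two branches leaves $\Gamma_{\Mm_0}$ unchanged, so well-foundedness of the long normal proof forces an eventual use of~(\ref{aksj22})---is exactly right.
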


\section{Refutations into programs}
\label{sec:soups}

Now we describe a translation
in the direction opposite to that discussed in Section~\ref{tamtam},
namely from bounded-arity intuitionistic formulas to answer set programs.
The title of this section alludes to the \relax
famous slogan "proofs into programs", because our translation 
demonstrates that  ASP, as a programming paradigm,
 corresponds to logic understood from the refuter's, rather than prover's, 
point of view. 

Let $\varphi$ be a~$\Sigma_1$ formula with atomic predicates 
of arity at most~$r$. We assume that $\varphi$ is written so that
no free variable is bound and no variable is bound twice. From 
now on the variables free in~$\varphi$ 
are called~{\it constants\/}, the word {\it variable\/} being
reserved for the bound ones. The latter may of course
become free in certain subformulas of~$\varphi$. 

Let $n$ be the length of the formula~$\varphi$. The number of constants
and variables in~$\varphi$ cannot exceed~$n$ so let us assume that 
all the constants in~$\varphi$ are among $c_1,\ldots,c_n$, 
and that all bound variables are among $x_1,\ldots,x_n$. 
Observe that the number of all atoms over $c_1,\ldots,c_n,x_1,\ldots,x_n$ 
is polynomial in~$n$. 

A sequence of constants of length~$n$ is called a {\it substitution\/}. 
We write $\psi[S]$ for the result of 
applying a substitution \mbox{$S=c_{i_1}\ldots c_{i_n}$} to a~formula~$\psi$,
i.e., substituting $c_{i_j}$ for $x_j$, whenever $x_j$ is free in~$\psi$.

Let $\Gamma$ be a set of 
$\Pi_1$ formulas of the form $\psi[S]$, where $\psi$ is a subformula
of~$\varphi$, and~$S$, $T$ be substitutions. A statement of the form 
$\Gamma\nvdash{\bf a}$, where~${\bf a}$ is a closed atom, is called
a~{\it disjudgment\/}. A~triple of the form
$\<\psi,S,T\>$ is called a~{\it question\/} asked at $\Gamma\nvdash{\bf a}$
when the formula $\psi[S]$ belongs to~$\Gamma$ and  
the target of $\psi[S][T]$ coincides with ${\bf a}$. 

Then
 $\psi=\forall\vec y_1(\sigma_1\to \forall\vec y_2(\sigma_2\to 
\cdots\to\forall\vec y_k(\sigma_k\to{\bf b})\ldots))$, where 
$\sigma_i\in \Sigma_1$, and $\,{\bf b}$ is an atom.
A question $\<\psi,S,T\>$ represents a possible proof \relax
attempt (an attempt to construct a~term in long normal form) 
with a variable of type $\psi[S]$ as a head variable and
with $\vec y_1,\ldots,\vec y_k$ instantiated by~$T$. 
The variables $\vec y_1,\ldots,\vec y_k$ are called 
the {\it top variables of\/}~$\psi$. 
Every~$\sigma_i$ is a $\Sigma_1$ formula of shape
$\tau_1\to\tau_2\to\cdots\to\tau_q\to{\bf c}_i$, 
where $\tau_j\in\Pi_1$, and $\,{\bf c}$ is an atom. 
The atom ${\bf c}_i$ is then called the $i$-th {\it subgoal\/}
in~$\psi$ and the formulas
$\tau_1,\ldots,\tau_q$ are {\it $i$-th descendants\/}
of~$\psi$. Note that top variables of $\psi$
may occur free in the target atom of $\psi$, the 
$i$-th descendants and subgoals of $\psi$. More precisely,
we have $\FV({\bf b})\cup \FV(\tau_j)\cup\FV({\bf c}_i)
\subseteq\FV(\psi)\cup\vec y_1\cup\cdots\cup\vec y_i$.

An {\it $i$-th  answer\/} to the question $\<\psi,S,T\>$ as above 
is any disjudgment of the form 
$\Gamma'\nvdash{\bf c}_i[S][T]$ such that
$\Gamma, \tau_1[S][T],\ldots,\tau_q[S][T]\subseteq\Gamma'$. 
A question {\it is answered\/} in a set $\Z$ when it has 
an $i$-th answer in~$\Z$, for some~$i$. 

The intuition to be associated with an $i$-th answer is that 
the question (proof attempt) $\<\psi,S,T\>$ is challenged \relax
at the $i$-th argument: the prover is expected to fail to prove
the formula $\sigma_i[S][T]$.

Finally we define a {\it refutation soup\/} for $\varphi$, 
for brevity called {\it soup\/}, as a~set~$\Z$ of disjudgments
such that $\,\pusty\nvdash\varphi$ belongs to~$\Z$ and every 
question asked at any $(\Gamma\nvdash{\bf a})\in\Z$ is answered in~$\Z$.
It is shown in~\cite{suz16} that:
\begin{itemize}\item If~there is a soup for  $\varphi$ then $\,\nvdash\varphi$.
\item If $\,\nvdash\varphi$ then there is a soup of size
at most~$2^{n^r}$, where $r$ is the maximum arity of the predicates
in~$\varphi$. 
\end{itemize}
We now construct a program~$\P$ such that~$\P$ has a~stable model \iff
there exists a~soup for $\varphi$ of size~$2^{n^r}$.

The domain $D_\P$ of~$\P$  consists of
all (occurrences of) subformulas of $\varphi$, all the 
constants $c_1,\ldots,c_n$, and two additional constants {\bf 0} and {\bf 1}.
The size of $D_\P$ is clearly polynomial in~$n$. 
Note that we count different occurrences of the 
same subformula as different objects. 

Every disjudgment in a soup of size~$2^{n^r}$ 
can be identified by a sequence of ${\bf 0}$s and {\bf 1}s of length~$n^r$, 
called an {\it address\/}.  One can think 
of a~soup as of a set of triples of the form $\xi:\Gamma\nvdash{\bf a}$,
where $\xi$ is the address of the disjudgment~$\Gamma\nvdash{\bf a}$.

\paragraph{The vocabulary of~$\P$:} The atomic predicates occurring in $\P$ 
are written using some abbreviations. We use $\xi,\eta,\ldots$ as 
metavariables for addresses, i.e., $\xi,\eta,\ldots$ are 
sequences of length~$n^r$, 
intended to be instantiated by ${\bf 0}$s and {\bf 1}s. We also 
write $S,T,\ldots$ for ``substitutions'', i.e., sequences 
of length~$n$ to be instantiated by constants of~$\varphi$.

The program $\P$ will use the following predicates. Each of them is accompanied
below by its intuitive meaning. 

\begin{itemize}
\item $\erD_i(\tau,\psi,S,T,U)$,\quad ``the formula $\tau$ is an $i$-th
{\bf descendant} of $\psi$, and $\tau[U]$ is obtained from~$\tau[S]$ 
by applying $T$ to the top variables of~$\psi$'';
\item $\erS_{i,\bf a}(\psi,S,T)$,\quad 
``the atom ${\bf a}$ is the result of applying $T$ to the
$i$-th {\bf subgoal} in~$\psi[S]$'';
\item $\erH_{\bf a}(\psi,S,T)$,\quad 
``the atom ${\bf a}$ is the result of applying $T$ to the 
{\bf head} atom of~$\psi[S]$'';
\item $\erG_{\bf a}(\xi)$,\quad ``the atom ${\bf a}$ is the {\bf goal}
at the address~$\xi$'';
\item $\erE(\psi, S,\xi)$,\quad ``the instance $\psi[S]$ of $\psi$ 
occurs in the assumption {\bf environment} at~$\xi$'';
\item $\overline\erE(\psi, S,\xi)$,\quad ``the above does not hold'';
\item $\erQ(\psi,S,T,\xi)$,\quad ``the triple $\<\psi,S,T\>$ is 
a~{\bf question} at~$\xi$'';
\item $\erA_i(\psi,S,T,\xi,\eta)$,\quad 
``the question $\<\psi,S,T\>$ at~$\xi$ has an $i$-th {\bf answer} at~$\eta$'';
\item $\overline\erA_i(\psi,S,T,\xi,\eta)$,\quad ``the above does not hold'';
\item $\rF$,\quad an auxiliary ``false'' for the 
contradiction clauses~(\ref{clausEEF}) and~(\ref{clausNYA});
\item $\rY(\psi,S,T,\xi)$,\quad an auxiliary predicate 
for clauses~(\ref{clausNYA}--\ref{clausYA}).

\end{itemize}

\paragraph{The clauses of~$\P$:} We begin with the most obvious clauses
determined by the syntax of~$\varphi$. There is a number of facts of
the form 
\begin{enumerate}
\item\label{cladsh1} $\erD_i(\tau,\psi,S,T,U)\klaus\,$; 
\item\label{cladsh2} $\erS_{i,\bf a}(\psi,S,T)\klaus\,$;
\item\label{cladsh3} $\erH_{\bf a}(\psi,S,T)\klaus\,$,
\end{enumerate}
where $\tau,\psi,S,T,U$, are concrete formulas and substitutions, and ${\bf a}$ 
is any atom. 
For example, if $n=4$ and 
$\psi = \forall y_1(\rR(y_1,c_2)\to 
\forall y_2(\rP(y_1,c_1)\to \rR(c_1,y_2,y_3)))$
then $\P$ contains all clauses of the form 
\mbox{$\erH_{\rR(c_1,c_4,c_2)}(\psi,*,*,c_2,*,*,c_4,*,*)\klaus\,$},
where every asterisk can be replaced by any constant.
This clause can be written as $\erH_{\rR(c_1,c_4,c_2)}(\psi,S,T)\klaus\,$,
where \mbox{$S=(*,*,c_2,*)$} and \mbox{$T=(*,c_4,*,*)$}. Let 
\mbox{$\psi= \sigma\to \forall y_3((\cdots\to \tau \to\cdots\to {\bf a})
\to \forall y_1(\sigma'\to {\bf b}))$} be
another example, where $y_4$ is free and 
$\tau=\forall y_2(\rP(y_2,y_1)\to \rR(y_3,y_4))$.
Then $\P$ contains e.g., all clauses of the form 
$\erD_2(\tau,\psi,S,T,U)\klaus\,$,
for $S=(*,*,*,d\ciut)$, $T=(*,*,e,*)$, and $U=(*,*,e,d\ciut)$, 
where the asterisks
can be replaced by anything. This is because variables $y_1$ and~$y_2$
are bound in both~$\psi$ and~$\tau$, and therefore ignored by the 
substitutions. There is a lot of similar clauses 
but still only a~polynomial number of such clauses.
The remaining clauses determine the shape of the model which is 
supposed to represent a~soup. We begin with a few facts 
describing the initial judgment at address ${\bf 00}\ldots{\bf 0}$.
If $\varphi=\psi_1\to\cdots\to\psi_n\to{\bf a}$ then the following
clauses are in~$\P$:

\begin{enumerate}\setcounter{enumi}{3}
\item\label{clausG}$\erG_{\bf a}({\bf 00}\ldots{\bf 0})\klaus\,$;
\item\label{clausE} $\erE(\psi_i,S,{\bf 00}\ldots{\bf 0})\klaus\,$;
\item\label{clausEF} $\overline\erE(\psi,S,{\bf 00}\ldots{\bf 0})\klaus\,$,
\end{enumerate}
where $i=1,\ldots,n$, $\psi\not\in\{\psi_1,\ldots,\psi_n\}$, 
and $S$ is an arbitrary substitution (note
that $\psi_i$ have no free variables, so $S$ does not matter).

The following clauses guarantee that any answer 
given in the model is correct \wrt the question it responds to.

\begin{enumerate}\setcounter{enumi}{6}
\item\label{clausEAE}
$\erE(\tau,U,\eta)\klaus\erA_i(\psi,S,T,\xi,\eta),\,\erE(\tau,U,\xi)$;
\item \label{clausEAD}
$\erE(\tau,U,\eta)\klaus\erA_i(\psi,S,T,\xi,\eta),\,
\erD_i(\tau,\psi,S,T,U)$;

\item\label{clausGA} $\erG_{\bf a}(\eta) \klaus\erA_i(\psi,S,T,\xi,\eta),\,
\erS_{i,\bf a}(\psi,S,T)$. 
\end{enumerate}

\noindent
The last one of the three above clauses defines the goal at~$\eta$,
while the first two list the necessary assumptions at~$\eta$. 
But the definition of a soup allows any other formula to occur as
an assumption in the answer judgment. This must be reflected by
any stable model. For this purpose we add to~$\P$ the three clauses 
(using the special symbol~$\rF$):
\begin{enumerate}\setcounter{enumi}{9}
\item\label{clausEE} $\erE(\psi,S,\xi)\klaus \neg\overline\erE(\psi,S,\xi)$,
\quad $\overline\erE(\psi,S,\xi)\klaus \neg\erE(\psi,S,\xi)$,
\item\label{clausEEF} 
$\rF \klaus \erE(\psi,S,\xi),\overline\erE(\psi,S,\xi),\neg\rF$,
\end{enumerate}
which force that either $\erE(\psi,S,\xi)$ or $\overline\erE(\psi,S,\xi)$
must hold, but not both. 

A similar measure is also applied for the predicates~$\erA_i$
(so far only occurring at the rhs). 
That is, $\P$
contains (for $i=1,\ldots n$) the constructs: 

\begin{enumerate}\setcounter{enumi}{11}
\item\label{clausAA} $\erA_i(\psi,S,T,\xi,\eta)\klaus\neg\overline
\erA_i(\psi,S,T,\xi,\eta),\erQ(\psi,S,T,\xi)$,\\ \quad 
$\overline\erA_i(\psi,S,T,\xi,\eta)\klaus 
\neg\erA_i(\psi,S,T,\xi,\eta),\erQ(\psi,S,T,\xi)$.
\end{enumerate}

\noindent
The next clauses (one for each ${\bf a}$)
define the notion of a question.

\begin{enumerate}\setcounter{enumi}{12}
\item \label{clausQEHG}
$\erQ(\psi,S,T,\xi)\klaus \erE(\psi, S,\xi),\, \erH_{\bf a}(\psi,S,T), \,
\erG_{\bf a}(\xi)$.
\end{enumerate}

\noindent
Clauses~\ref{clausAA} permit an arbitrary choice 
between predicates $\erA_i(\dots)$ and $\overline\erA_i(\dots)$.
We need additional clauses to ensure that this choice makes sense, 
i.e.,  that every question has an answer. For this purpose $\P$ 
includes clauses (where $i\leq n$, and $\rY$ does not occur elsewhere).

\begin{enumerate}\setcounter{enumi}{13}
\item\label{clausNYA}
 $\rF\klaus \neg \rY(\psi,S,T,\xi),\
\erQ(\psi,S,T,\xi),\neg\rF$;
\item\label{clausYA} $\rY(\psi,S,T,\xi)\klaus \erA_i(\psi,S,T,\xi,\eta)$.
\end{enumerate}
\noindent
Clauses~(\ref{clausNYA}--\ref{clausYA}) together enforce that 
any stable model of the program must satisfy the following
property: If $\erQ(\psi,S,T,\xi)\in\Mm$ then 
\mbox{$\erA_i(\psi,S,T,\xi,\eta)\in\Mm$}, for some~$i$ and~$\eta$. 
Indeed, in a~stable model it is impossible to have 
$\erQ(\psi,S,T,\xi)$ without $\rY(\psi,S,T,\xi)$. And $\rY(\psi,S,T,\xi)$
can only be derived if $\erA_i(\psi,S,T,\xi,\eta)$ holds 
for some~$i$ and~$\eta$.\footnote{This is an example of a more general
pattern: clauses $\rY(\vec y)\klaus \rP(\vec x,\vec y\ciut)$ 
and~$\klaus\neg\rA(\vec y\ciut)$, where $\rY$ is fresh, enforce that
any stable model must satisfy 
$\forall\vec y\,\exists\vec x\,P(\vec x,\vec y\ciut)$.}

At the end we need to guarantee that no judgment can address 
two goals. This is handled by one clause for any pair ${\bf a}$ and ${\bf b}$
of distinct atoms.
\begin{enumerate}\setcounter{enumi}{15}
\item\label{clausGG} $~\klaus \erG_{\bf a}(\xi),\,\erG_{\bf b}(\xi)$.
\end{enumerate}

\paragraph{Cooking a soup from a model} We now show that every
stable model~$\Mm$ of $\P$ defines a~soup for~$\varphi$. 
Clauses~(\ref{clausEE}) and~(\ref{clausEEF}) force $\Mm$ to assign 
an environment $\Gamma_\xi$ to every address~$\xi$. This~$\Gamma_\xi$
is uniquely defined. Indeed, for each formula $\psi[S]$
either $\erE(\psi,S,\xi)\in\Mm$ or $\overline\erE(\psi,S,\xi)\in\Mm$
by clause~(\ref{clausEE}), and not both of them by clause~(\ref{clausEEF}).
(Note that~(\ref{clausEE}) alone are not enough, because $\erE$ and
$\overline\erE$ may occur at the lhs of other clauses.)

\begin{lemma}\label{hhderoo22}
Let $\Z$ consists of all disjudgments
$\xi:\Gamma_\xi\nvdash{\bf a}$ such that 
\mbox{$\erG_{\bf a}(\xi)\in\Mm$}. Then $\Z$ is a~soup. 
\end{lemma}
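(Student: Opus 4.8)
The plan is to check the two clauses of the definition of a refutation soup for the set $\Z$ attached to a fixed stable model $\Mm$ of $\P$: that $\Z$ contains the initial disjudgment, and that every question asked at a disjudgment of $\Z$ is answered in $\Z$. Throughout I use that $\Mm$, being stable, equals the least fixed point $I(\P,\Mm)$ of the reduct $\P^\Mm$, so that $\Mm$ is closed under the negation-free clauses of $\P$ while minimality lets me conclude, from the presence of an atom in $\Mm$, that some clause of $\P^\Mm$ justifies it; and I use, as established in the paragraph before the lemma, that $\Gamma_\xi=\{\psi[S]\mid\erE(\psi,S,\xi)\in\Mm\}$ is well defined by clauses~(\ref{clausEE}) and~(\ref{clausEEF}). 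For the base case, fact~(\ref{clausG}) gives $\erG_{\bf a}({\bf 00}\ldots{\bf 0})\in\Mm$, so the disjudgment at address ${\bf 00}\ldots{\bf 0}$ lies in $\Z$, and facts~(\ref{clausE})--(\ref{clausEF}) force $\Gamma_{{\bf 00}\ldots{\bf 0}}$ to be exactly $\{\psi_1,\ldots,\psi_n\}$ when $\varphi=\psi_1\to\cdots\to\psi_n\to{\bf a}$; by Lemma~\ref{genera1} this disjudgment is the one identified with $\pusty\nvdash\varphi$.

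For the closure property, fix $(\xi:\Gamma_\xi\nvdash{\bf a})\in\Z$ and a question $\<\psi,S,T\>$ asked at it, so $\erE(\psi,S,\xi)\in\Mm$ and, since the target of $\psi[S][T]$ is ${\bf a}$, the fact $\erH_{\bf a}(\psi,S,T)$ of type~(\ref{cladsh3}) is in $\Mm$. I then run the following chain of forced memberships. First, $\rF\notin\Mm$: the symbol $\rF$ heads only the constraint clauses~(\ref{clausEEF}) and~(\ref{clausNYA}), each containing $\neg\rF$, so $\rF\in\Mm$ would delete every clause defining $\rF$ from $\P^\Mm$, contradicting $\rF\in I(\P,\Mm)$. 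Second, from $\erE(\psi,S,\xi),\erH_{\bf a}(\psi,S,T),\erG_{\bf a}(\xi)\in\Mm$ clause~(\ref{clausQEHG}) forces $\erQ(\psi,S,T,\xi)\in\Mm$. Third, since $\rF\notin\Mm$, the reduct of clause~(\ref{clausNYA}) is $\rF\klaus\erQ(\psi,S,T,\xi)$ whenever $\rY(\psi,S,T,\xi)\notin\Mm$, and as $\rF$ is not derivable this forces $\rY(\psi,S,T,\xi)\in\Mm$. Fourth, $\rY$ heads only the clauses~(\ref{clausYA}), so minimality yields an index $i$ and an address $\eta$ with $\erA_i(\psi,S,T,\xi,\eta)\in\Mm$.

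It remains to show that $\eta:\Gamma_\eta\nvdash{\bf c}_i[S][T]$ is an $i$-th answer to $\<\psi,S,T\>$ and belongs to $\Z$, where ${\bf c}_i$ is the $i$-th subgoal of $\psi$ and $\tau_1,\ldots,\tau_q$ its $i$-th descendants. The fact $\erS_{i,{\bf c}_i[S][T]}(\psi,S,T)$ of type~(\ref{cladsh2}) is in $\Mm$, so clause~(\ref{clausGA}) gives $\erG_{{\bf c}_i[S][T]}(\eta)\in\Mm$ and the disjudgment lies in $\Z$. Clause~(\ref{clausEAE}), applied to $\erA_i(\psi,S,T,\xi,\eta)$ together with each $\erE(\tau,U,\xi)\in\Mm$, gives $\erE(\tau,U,\eta)\in\Mm$, so $\Gamma_\xi\subseteq\Gamma_\eta$; and clause~(\ref{clausEAD}), applied to $\erA_i(\psi,S,T,\xi,\eta)$ together with the facts~(\ref{cladsh1}) $\erD_i(\tau_j,\psi,S,T,U_j)$ (with $U_j$ the substitution recording the combined effect, so that $\tau_j[U_j]=\tau_j[S][T]$), gives $\erE(\tau_j,U_j,\eta)\in\Mm$, i.e., $\tau_j[S][T]\in\Gamma_\eta$ for each $j$. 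Hence $\Gamma_\xi,\tau_1[S][T],\ldots,\tau_q[S][T]\subseteq\Gamma_\eta$, which is exactly the requirement defining an $i$-th answer, so the question is answered in $\Z$.

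The main obstacle I anticipate is not conceptual but the combinatorial bookkeeping of substitutions: one must make sure that the polynomially many facts~(\ref{cladsh1})--(\ref{cladsh3}) and~(\ref{clausG})--(\ref{clausEF}) actually placed in $\P$ do provide, for every relevant $\psi,S,T$, the descendant facts $\erD_i$, the subgoal facts $\erS_{i,\bf a}$ and the head facts $\erH_{\bf a}$ with the substitutions composed correctly, in particular the interaction of $S$, of $T$ acting on the top variables of $\psi$, and of the merged substitution $U$. Once that is granted, the closure steps via the negation-free clauses~(\ref{clausEAE})--(\ref{clausGA}), (\ref{clausQEHG}) and~(\ref{clausYA}) and the constraint/minimality steps via clauses~(\ref{clausEE}), (\ref{clausEEF}) and~(\ref{clausNYA}) are routine.
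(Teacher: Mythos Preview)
Your proof is correct and follows essentially the same route as the paper's: verify the initial disjudgment via clauses~(\ref{clausG})--(\ref{clausEF}), push $\erQ(\psi,S,T,\xi)$ into $\Mm$ via clause~(\ref{clausQEHG}), use clause~(\ref{clausNYA}) and stability to force $\rY(\psi,S,T,\xi)\in\Mm$, extract some $\erA_i(\psi,S,T,\xi,\eta)$ from clause~(\ref{clausYA}) by minimality, and then read off the answer's goal and environment from clauses~(\ref{clausEAE})--(\ref{clausGA}). Your write-up is in fact more careful than the paper's in two places: you explicitly argue $\rF\notin\Mm$ (which the paper leaves implicit), and you spell out how the facts~(\ref{cladsh1})--(\ref{cladsh3}) supply the needed $\erD_i$, $\erS_{i,\bf a}$, $\erH_{\bf a}$ atoms, whereas the paper simply gestures at ``the effect of clauses~(\ref{clausEAE}--\ref{clausGA})''.
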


\begin{proof}
Clauses~(\ref{clausG}--\ref{clausEF}) ensure that
the initial judgment is in~$\Z$. 
In addition, whenever
\mbox{$\erG_{\bf a}(\xi)\in\Mm$}, and the triple $\<\psi,S,T\>$ is a question
at~$\xi$, then this question has an answer in~$\Z$.
Indeed, clause~(\ref{clausQEHG}) yields $\erQ(\psi,S,T,\xi)\in\Mm$, \relax
and this implies that either \mbox{$\erA_i(\psi,S,T,\xi,\eta)\in\Mm$} or
$\overline\erA_i(\psi,S,T,\xi,\eta)\in\Mm$ by clause~(\ref{clausAA}). 
Because of~(\ref{clausNYA})
we must have $\rY(\psi,S,T,\xi)\in\Mm$ 
and since $\Mm$ is stable, one of clauses~(\ref{clausYA}) must have been fired,
i.e., at least one~$\erA_i(\psi,S,T,\xi,\eta)$ is in the model.
It remains to observe the effect of clauses~(\ref{clausEAE}--\ref{clausGA}):
the disjudgment at~$\eta$ is indeed an answer to the question $\<\psi,S,T\>$
asked at~$\xi$.
\end{proof}

\paragraph{Boiling out a model from a soup} Suppose $\Z$ is a soup without
unnecessary ingredients, that is
every judgment in $\Z$, except the initial judgment, is an answer
to some question posed in~$\Z$. Assume also that every judgment
in the soup is given an address, with the initial judgment
having the address ${\bf 00}\ldots{\bf 0}$. We require that 
any given address uniquely determines a judgment, but one judgment
may have several addresses. Repetitions are permitted, because 
it is technically convenient that all addresses of a fixed length 
are in use (do actually refer to some judgments).

We say that 
the address ${\bf 00}\ldots{\bf 0}$ has {\it depth\/}~zero.
Otherwise the depth of an address is 1 plus the minimal depth
of a question it answers. This definition is extended to closed atoms 
as follows: the {\it depth\/} of a closed atom, where a single address 
occurs (like e.g.,~$\erE(\psi,S,\xi)$) is the depth of that address, and
in case of $\erA_i(\psi,S,T,\xi,\eta)$ it is the depth of the first 
address~$\xi$.  

We define a model $\Mm$ (recall that
a model is just a set of ground atoms). The elements of $\Mm$ are 
chosen as follows:

\begin{itemize}
\item Atoms $\erD_i(\tau,\psi,S,T,U)$, $\erS_{i,\bf a}(\psi,S,T)$,  
$\erH_{\bf a}(\psi,S,T)$ are selected according to the syntax of
the main formula~$\varphi$.
\item Atoms $\erG_{\bf a}(\xi)$, $\erE(\psi, S,\xi)$, 
$\overline\erE(\psi, S,\xi)$
are selected according to the shape of judgments.
\item Atoms $\erQ(\psi,S,T,\xi)$, $\erA_i(\psi,S,T,\xi,\eta)$, and 
$\overline\erA_i(\psi,S,T,\xi,\eta)$ are selected according to the 
structure of questions and answers in the soup. But 
if $\<\psi, S,T\>$ does not form a question at address $\xi$ then
neither $\erA_i(\psi,S,T,\xi,\eta)$ nor $\overline\erA_i(\psi,S,T,\xi,\eta)$
is in the model.
\item Atom $\rF$ is not selected.
\item The atom $\rY(\psi,S,T,\xi)$ is selected whenever
at least one of $\erA_i(\psi,S,T,\xi,\eta)$ is selected.
\end{itemize}

\begin{lemma}\label{oodd4436}
The model $\Mm$ is a stable model of~$\P$.
\end{lemma}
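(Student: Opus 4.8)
The plan is to verify the two inclusions that together amount to stability: since $I(\P,\Mm)$ is by construction the least model of the reduct $\P^\Mm$, it suffices to check (a) that $\Mm$ is a model of $\P^\Mm$, which gives $I(\P,\Mm)\subseteq\Mm$, and (b) that every atom of $\Mm$ has a derivation in $\P^\Mm$, which gives $\Mm\subseteq I(\P,\Mm)$. First I would describe the reduct. The facts~(\ref{cladsh1}--\ref{clausEF}) survive unchanged. By the definition of $\Mm$, for every $\psi,S,\xi$ exactly one of $\erE(\psi,S,\xi)$, $\overline\erE(\psi,S,\xi)$ lies in $\Mm$, and, for every question $\<\psi,S,T\>$ at $\xi$ and every $\eta$, exactly one of $\erA_i(\psi,S,T,\xi,\eta)$, $\overline\erA_i(\psi,S,T,\xi,\eta)$ lies in $\Mm$; hence in $\P^\Mm$ each clause of~(\ref{clausEE}) and of~(\ref{clausAA}) is either deleted or collapses to a fact whose head is in $\Mm$, while the complementary atom has no such fact. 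Finally the ``contradiction'' clauses~(\ref{clausEEF}), (\ref{clausNYA}) and the constraint~(\ref{clausGG}) become ordinary positive clauses whose bodies I must show are never satisfied by $\Mm$.

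For $I(\P,\Mm)\subseteq\Mm$ I would go through the clause schemata. The facts~(\ref{cladsh1}--\ref{clausEF}) have their heads in $\Mm$ by construction. If the body of one of~(\ref{clausEAE}--\ref{clausGA}) holds in $\Mm$, then $\erA_i(\psi,S,T,\xi,\eta)\in\Mm$ says that the disjudgment at $\eta$ is an $i$-th answer to the question $\<\psi,S,T\>$ at $\xi$, and unwinding the definition of ``$i$-th answer'' immediately gives that the $\erE$-atoms and the $\erG$-atom required at $\eta$ are in $\Mm$. If the body of~(\ref{clausQEHG}) holds then $\psi[S]\in\Gamma_\xi$ and the target of $\psi[S][T]$ is the goal at $\xi$, so $\<\psi,S,T\>$ is a question at $\xi$ and $\erQ(\psi,S,T,\xi)\in\Mm$; and $\rY$ was put into $\Mm$ precisely when some $\erA_i$ was, which settles~(\ref{clausYA}). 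The three potentially dangerous clauses are harmless: the body of~(\ref{clausEEF}) needs both $\erE(\psi,S,\xi)$ and $\overline\erE(\psi,S,\xi)$, which never co-occur in $\Mm$; the body of~(\ref{clausNYA}) needs $\erQ(\psi,S,T,\xi)$ without $\rY(\psi,S,T,\xi)$, but here one uses that $\Z$ is a soup, so a question at $\xi$ is answered in $\Z$, whence some $\erA_i(\psi,S,T,\xi,\eta)\in\Mm$ and then $\rY(\psi,S,T,\xi)\in\Mm$; and the body of~(\ref{clausGG}) needs two distinct goals at one address, impossible since an address determines a unique judgment with a unique goal.

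For $\Mm\subseteq I(\P,\Mm)$ I would stratify the atoms of $\Mm$ by a pair (depth, kind), ordered lexicographically, where depth is the notion defined just before the lemma and kind ranks the forms $\erE,\overline\erE\prec\erG\prec\erQ\prec\erA_i,\overline\erA_i\prec\rY$, the facts~(\ref{cladsh1}--\ref{cladsh3}) getting the least rank, and then argue by induction on this rank that each atom of $\Mm$ has a derivation in $\P^\Mm$ all of whose other atoms have strictly smaller rank. Concretely: $\erD$, $\erS$, $\erH$ are reduct-facts; each $\erE(\psi,S,\xi)$ or $\overline\erE(\psi,S,\xi)$ in $\Mm$ is a reduct-fact, because its complement is absent and so the matching form of~(\ref{clausEE}) collapses to a fact; $\erG_{\bf a}(\xi)$ of depth $0$ is the fact~(\ref{clausG}), and of depth $d>0$ it is obtained by~(\ref{clausGA}) from the fact $\erS_{i,\bf a}(\psi,S,T)$ together with an atom $\erA_i(\psi,S,T,\xi',\xi)$ whose first address $\xi'$ has depth $d-1$; $\erQ(\psi,S,T,\xi)$ follows by~(\ref{clausQEHG}) from the $\erE$-, $\erH$- and $\erG$-atoms at $\xi$; $\erA_i$ and $\overline\erA_i$ from the surviving form of~(\ref{clausAA}) using $\erQ(\psi,S,T,\xi)$; and $\rY$ from~(\ref{clausYA}).

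The step I expect to be the genuine obstacle is the well-foundedness of this last induction, i.e.\ that the depth function is total on the addresses actually in use. This is not automatic: a cycle of judgments, each an answer to a question posed at the next (the refuter ``cheating'' in the sense of the earlier footnote), would leave the corresponding $\erG$-atoms with no finite derivation in $\P^\Mm$, and such an $\Mm$ would fail to be stable. One therefore has to lean on the set-up preceding the lemma --- retaining in the soup only judgments reachable from the initial one and addressing them accordingly --- under which the depth of an address is just its distance from ${\bf 00}\ldots{\bf 0}$ along the answer relation and hence finite, with repetitions of addresses causing no harm. Granting this, (a) and (b) give $\Mm=I(\P,\Mm)$, so $\Mm$ is a stable model of $\P$.
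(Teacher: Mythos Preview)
Your proof is correct and follows the same two-part structure as the paper: the paper also shows $\Mm\subseteq I(\P,\Mm)$ by induction on depth (deriving $\erG$, then $\erQ$, then $\erA_i/\overline\erA_i$ at each level) and $I(\P,\Mm)\subseteq\Mm$ by induction on derivations in $\P^\Mm$, which is exactly your verification that $\Mm$ models each reduct clause. Your write-up is in fact more explicit than the paper's sketch in two respects --- you spell out that clause~(\ref{clausEE}) collapses to reduct-facts for the $\erE/\overline\erE$ atoms, and you correctly flag the well-foundedness of the depth function, which the paper takes for granted but which does require reading ``without unnecessary ingredients'' as reachability from the initial address.
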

\begin{proof}
One must check that all the selected atoms 
are derivable, i.e., belong to~$I(\P,\Mm)$. 
For the syntax-related atoms it is obvious by the choice 
of facts (\ref{cladsh1}--\ref{cladsh3}). In all other atoms there are
occurrences of addresses, and one proceeds by induction \wrt depth.
Suppose all atoms in~$\Mm$ of depth smaller than~$d$ are derivable, \relax
and let the depth of~$\xi$ be~$d$. Consider the goal~${\bf a}$ at
the address~$\xi$. If $d=0$ then $\erG_{\bf a}({\bf 00}\ldots{\bf 0})$
is derived from clause~(\ref{clausG}). Otherwise $\xi$ is an address
of an answer to a question of depth at most $d-1$. Thus we can instantiate
clause~(\ref{clausGA}) as 
$\erG_{\bf a}(\xi) \klaus\erA_i(\psi,S,T,\zeta,\xi),\,
\erS_{i,\bf a}(\psi,S,T)$, where $\zeta$ is of depth at most $d-1$,
and by the induction hypothesis $\erA_i(\psi,S,T,\zeta,\xi)\in I(\P,\Mm)$.
This derives $\erG_{\bf a}(\xi)$ which is then used to derive
atoms of the form~$\erQ(\psi,S,T,\xi)$ from~(\ref{clausQEHG}).\footnote{
This is the only way to derive $\erQ(\psi,S,T,\xi)$, so it can only happen
when the necessary atoms are in~$\Mm$, i.e., the question in question
does actually occur in the soup.}
Given $\erQ(\psi,S,T,\xi)$, clause~(\ref{clausAA}) yields either
$\erA_i(\psi,S,T,\xi,\eta)$ or $\overline\erA_i(\psi,S,T,\xi,\eta)$,
for any~$\eta$, 
depending on whether an appropriate answer occurs at $\eta$ or not. 

The proof that no unwanted atom is derivable goes by 
induction \wrt derivations in~$\P^\Mm$. First observe that 
clauses~(\ref{cladsh1}--\ref{clausEF})
only introduce atoms that belong to the model~$\Mm$. Then note 
that in each pair of clauses~(\ref{clausEE}) and~(\ref{clausAA}) 
one of the negated atoms is already in the model. As the 
corresponding clause gets erased, it is impossible to use~(\ref{clausEE}) 
or~(\ref{clausAA}) to derive both 
the underlined and non-underlined version of the same atom.
In particular, clause~(\ref{clausEEF}) will not derive~$\rF$.
Suppose an atom $\erE(\tau,U,\eta)$ is derived by clause~(\ref{clausEAE}).
Then both $\erA_i(\psi,S,T,\xi,\eta)$ and $\erE(\tau,U,\xi)$ must have
been derived before, and therefore belong to~$\Mm$. It follows that
$\eta$ provides an $i$-th answer to $\xi$ in our soup, whence 
$\erE(\tau,U,\eta)\in\Mm$ as well. Similar argument applies to 
atom derived using clauses~(\ref{clausEAD},\ref{clausGA},\ref{clausQEHG},%
\ref{clausYA}).
\end{proof}

We can now conclude with the following

\begin{proposition}\label{222xxx}
The program $\P$ has a stable model \iff $\varphi$ is refutable.
\end{proposition}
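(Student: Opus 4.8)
The plan is to prove Proposition~\ref{222xxx} by combining the two lemmas just established with the characterization of refutability via soups quoted from~\cite{suz16}. Recall that we have two facts at our disposal: (a)~there is a soup for $\varphi$ \iff $\nvdash\varphi$ (the ``if'' direction being the easy half, the ``only if'' giving a soup of size $\leq 2^{n^r}$); (b)~Lemma~\ref{hhderoo22}, which turns any stable model of~$\P$ into a soup; and (c)~Lemma~\ref{oodd4436}, which turns a suitably-addressed soup into a stable model of~$\P$.

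For the direction ``$\P$ has a stable model $\To$ $\varphi$ is refutable'': take a stable model~$\Mm$ of~$\P$. By Lemma~\ref{hhderoo22} the set~$\Z$ of disjudgments $\xi:\Gamma_\xi\nvdash{\bf a}$ with $\erG_{\bf a}(\xi)\in\Mm$ is a soup for~$\varphi$ (in particular $\pusty\nvdash\varphi\in\Z$ by clauses~(\ref{clausG}--\ref{clausEF}), and clause~(\ref{clausGG}) guarantees each address carries at most one goal so that $\Gamma_\xi\nvdash{\bf a}$ is well-defined). By fact~(a) the mere existence of a soup gives $\nvdash\varphi$, i.e.,~$\varphi$ is refutable.

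For the converse ``$\varphi$ refutable $\To$ $\P$ has a stable model'': if $\nvdash\varphi$ then by fact~(a) there is a soup~$\Z_0$ of size at most~$2^{n^r}$. We first massage $\Z_0$ into the shape required by Lemma~\ref{oodd4436}: discard any disjudgment other than the initial one that is not an $i$-th answer to some question posed in the soup (this does not destroy the soup property, since the discarded judgments were never referenced); then assign addresses from $\{{\bf 0},{\bf 1}\}^{n^r}$, giving the initial judgment the address ${\bf 00}\ldots{\bf 0}$, making sure every address of that length is actually used (pad by repeating addresses for existing judgments, which is harmless as one judgment may have several addresses) and that each address determines a unique judgment. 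Here the size bound $2^{n^r}$ is exactly what lets all length-$n^r$ addresses be consumed. Now Lemma~\ref{oodd4436} says the model~$\Mm$ built from this addressed soup is a stable model of~$\P$, which is what we wanted.

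The main obstacle—or at least the point that needs the most care—is the bookkeeping in the second direction: ensuring the addressing assumptions of Lemma~\ref{oodd4436} can always be met (total surjectivity of addresses onto a set of size $\leq 2^{n^r}$, uniqueness of the judgment per address, the ${\bf 00}\ldots{\bf 0}$ convention for the initial judgment) and that trimming the soup to ``no unnecessary ingredients'' genuinely preserves soup-hood. Everything semantic has already been localized inside Lemmas~\ref{hhderoo22} and~\ref{oodd4436} and inside the cited characterization of refutability, so the proof of the proposition itself is just the assembly: one short paragraph per implication, invoking (a), (b) and the trimming/addressing step, then (c).
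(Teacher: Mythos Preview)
Your proposal is correct and matches the paper's intended argument: the paper gives no explicit proof of Proposition~\ref{222xxx}, simply stating ``We can now conclude with the following'' after Lemmas~\ref{hhderoo22} and~\ref{oodd4436}, so the assembly you describe---soup characterization from~\cite{suz16} plus the two lemmas---is exactly what is meant. One small remark on the trimming step: removing non-answer disjudgments may expose further disjudgments that were only answers to questions posed at the removed ones, so a single pass need not yield a soup ``without unnecessary ingredients''; you should phrase it as taking the least sub-soup containing the initial disjudgment (or iterate to a fixpoint), which is straightforward since the soup is finite.
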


\begin{corollary}
  SMS entailment and 
monadic $\Sigma_1$ provability formulas are mutually polynomial 
time translatable.
\end{corollary}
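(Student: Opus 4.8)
The plan is to obtain the corollary by assembling the two translations already established, together with Lemma~\ref{pljkuiuh} and the ``easy-to-unary'' reduction of~\cite{suz16}. Recall that Proposition~\ref{111yyy} produces, from a program $\P$ and a nullary atom $\rOm$, an \emph{easy} $\Sigma_1$ formula $\varphi$ (polynomial in the size of $\P$, as one checks by counting the axiom families of Figure~\ref{figaxioms} and bounding their sizes) with $\vdash\varphi$ iff $\P\models_{\sms}\rOm$; and Proposition~\ref{222xxx} produces, from a $\Sigma_1$ formula of bounded arity, a program $\P$ (again polynomial, since $D_\P$ and the clause schemas of Section~\ref{sec:soups} are polynomial in $n$ for fixed arity bound) with $\P$ having a stable model iff the formula is refutable.

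First I would treat the direction from SMS entailment to monadic $\Sigma_1$ provability. Given an instance ``does $\P\models_{\sms}{\bf a}$?'' with ${\bf a}$ a ground atom, introduce a fresh nullary predicate $\rOm$ and put $\P'=\P\cup\{\rOm\klaus{\bf a}\}$; from the least-fixed-point definition of $I(\P,\Mm)$ one sees that the stable models of $\P'$ are exactly the stable models of $\P$, each adjoined with $\rOm$ precisely when ${\bf a}$ is present, so $\P\models_{\sms}{\bf a}$ iff $\P'\models_{\sms}\rOm$. Applying Proposition~\ref{111yyy} to $(\P',\rOm)$ yields an easy $\Sigma_1$ formula $\varphi$ with $\vdash\varphi$ iff $\P\models_{\sms}{\bf a}$, and the polynomial easy-to-unary translation of~\cite{suz16} turns $\varphi$ into an equiprovable monadic $\Sigma_1$ formula $\varphi'$. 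The composition is a polynomial-time reduction.

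For the converse, I would start from a monadic $\Sigma_1$ formula $\varphi$; as monadic is a special case of bounded arity, Proposition~\ref{222xxx} gives in polynomial time a program $\P$ with $\P$ having a stable model iff $\varphi$ is refutable, i.e.\ iff $\varphi$ is not provable. Picking a nullary predicate $\rOm$ not occurring in $\P$, Lemma~\ref{pljkuiuh} gives $\P\models_{\sms}\rOm$ iff $\P$ has no stable model, hence iff $\varphi$ is provable. Thus $\varphi\mapsto(\P,\rOm)$ is a polynomial-time reduction of monadic $\Sigma_1$ provability to SMS entailment, and together with the previous step this establishes the mutual polynomial-time translatability.

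The only verifications left are routine: that the two constructions are genuinely polynomial (inspection of Figure~\ref{figaxioms}, and of the clause list of Section~\ref{sec:soups}, where there are finitely many schemas, each with a bounded number of metavariables over $D_\P$, and addresses of length $n^r$ contribute only polynomial arity for fixed $r$), and that the auxiliary clause $\rOm\klaus{\bf a}$ acts on stable models as claimed. The one genuinely substantive point — and really the whole reason the corollary holds in the \emph{monadic} form rather than merely for arbitrary-arity $\Sigma_1$ — is that $\varphi$ in Proposition~\ref{111yyy} was deliberately built so that all its implication subformulas have nullary targets, which is exactly the precondition making the easy-to-unary step of~\cite{suz16} applicable; a naive first-order analogue of~\cite{su-asp} would create subformula targets of unbounded arity, and no reduction to any fixed-arity fragment would then be available.
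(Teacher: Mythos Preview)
Your proposal is correct and matches the paper's intended argument; the paper itself states the corollary without proof, relying on the reader to assemble Proposition~\ref{111yyy}, the easy-to-unary reduction of~\cite{suz16}, Proposition~\ref{222xxx}, and Lemma~\ref{pljkuiuh} exactly as you do. Your extra step of reducing an arbitrary ground atom~${\bf a}$ to a fresh nullary~$\rOm$ via the clause $\rOm\klaus{\bf a}$ is a harmless technical normalisation (the construction of Section~\ref{tamtam} would in fact accept any ground~$\rOm$, since axiom~(\ref{aksj22}) still has nullary target~$\lupa$), and your closing remark about why nullary targets are essential for the easy-to-unary step is precisely the design point the paper emphasises in the Introduction.
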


\section*{Conclusion} 

In this paper we demonstrated a mutual polynomial time translation
between first-order Answer Set Programming and the bounded arity 
$\Sigma_1$ fragment of intuitionistic predicate logic. One of the main 
motivations for this work was to give an alternative 
point of view on ASP, namely the constructive interpretation. 
Constructive proofs serve to derive conclusions from assumptions, 
very much as logic programs derive goals, therefore we find it natural 
to work in the context of the ASP entailment problem. Especially because
existence of stable models is nothing else than non-entailment. 

While our translations may seem complicated on the first look,
let us point out that they are actually quite natural. 
The format of the object (target) language is very simple, in particular 
very similar to the ASP source language. All the assumptions in the resulting
formul have the form of straightforward rules where each premise is
either an atom or a pseudo-negated atom of
the form~\mbox{${\bf a}\to\star$}. With such rules as axioms, 
proof search can be seen
as an algorithm which manipulates a database by expanding and querying it.
The translation therefore not only yields the known complexity bound but 
also delineates the spectrum of adequate proof tactics.

Our work gives a complementary sight on the explanation of logic
programming in terms of Howard's system \textbf{H} introduced by Fu
and Komendantskaya \cite{fukom17}. The system interprets Horn formulas
as types, and derivations for a given formula as the proof terms
inhabiting the type corresponding to the formula. The authors
demonstrate that different forms of resolution can be expressed in
this system and relate them. In our work the operation of resolution
is limited since we work with constants only. Instead, our work
concentrates on the operational semantics of negation. Indeed, our
translation, due to the limited number of available formula targets,
has strong flavour of countinuation-passing style transformation.

Logic programming with negation undestood as in ASP has its
explanation in terms of equilibrium logic
\cite{Pearce99,Pearce06}. This logic is built on the foundations of
intuitionistic theory called \emph{here and there logic}, which can be
axiomatized in propositional case by the scheme
\begin{displaymath}
(\lnot\alpha\to\beta)\to \varphi_P(\alpha,\beta), 
\end{displaymath}
\noindent 
where $\varphi_P(\alpha,\beta) = ((\beta\to\alpha)\to\beta)\to\beta$
is the Peirce's law. It is interesting to see that proof construction
techniques that occur when the result of our translation is to be
proved are very similar to ones used in case of \emph{here and there
  logic}.

While the equilibrium logic relies on a property of a model to settle
the space to represent a~coherent database of inferrable basic
statements, in the case of our logic, the space is located in the
proof environment that evolves during the proof construction process.

\subsection*{Acknowledgement} The authors are greatly indebted to Konrad 
Zdanowski for numerous discussions and suggestions regarding this work.

\bibliographystyle{acmtrans}
\bibliography{asp}

\label{lastpage}
\end{document}